\documentclass[11pt]{article}

\usepackage[a4paper,margin=1in]{geometry}
\usepackage[T1]{fontenc}
\usepackage[utf8]{inputenc}
\usepackage{mathpazo}
\usepackage{microtype}

\usepackage{amsmath,amssymb,amsthm,mathtools}
\usepackage{mathpartir}
\usepackage{enumitem}
\usepackage{float}
\usepackage{needspace}
\usepackage{aliascnt}

\usepackage[hidelinks]{hyperref}
\usepackage[nameinlink,noabbrev]{cleveref}

\newcommand{\IKfour}{\mathsf{IK4}}
\newcommand{\NIKfour}{\mathsf{NIK4}}
\newcommand{\Sub}{\operatorname{Sub}}
\newcommand{\Up}{\mathord{\uparrow}}
\newcommand{\Pre}{\operatorname{Pre}}

\theoremstyle{plain}
\newtheorem{theorem}{Theorem}[section]
\newaliascnt{lemma}{theorem}
\newtheorem{lemma}[lemma]{Lemma}
\aliascntresetthe{lemma}
\newaliascnt{proposition}{theorem}
\newtheorem{proposition}[proposition]{Proposition}
\aliascntresetthe{proposition}
\newaliascnt{corollary}{theorem}
\newtheorem{corollary}[corollary]{Corollary}
\aliascntresetthe{corollary}

\theoremstyle{definition}
\newaliascnt{definition}{theorem}
\newtheorem{definition}[definition]{Definition}
\aliascntresetthe{definition}

\title{A Kruskal Decision Procedure for Intuitionistic Modal Logic $\IKfour$}
\author{Mario Piazza\\
\small Scuola Normale Superiore, Pisa\\
\small \href{mailto:mario.piazza@sns.it}{mario.piazza@sns.it}}
\date{}

\begin{document}
\maketitle

\begin{abstract}
We prove decidability of Simpson's intuitionistic modal logic $\IKfour$ by
working directly with cut-free nested proofs. Once the end formula is fixed,
only finitely many combinations of input and output formulae can occur at a
node, although the modal tree itself remains unbounded. We order these nested sequents by rooted homeomorphic embedding: weakening may
add input formulae, while transitivity allows a modal edge to be stretched
into a non-empty path.

Kruskal's theorem makes rooted homeomorphic embedding a well-quasi-order, but
does not by itself make backward application of the rules effective: an
inference may still occur inside an arbitrarily large context. The finite-support lemma shows that a minimal predecessor need retain only the
positions used by the inference, the images of the chosen basis elements, and
the branch points joining them.
Together with an effective enumeration of bounded rule instances, this
bound makes the minimal predecessors computable. Backward closure from the initial sequents gives an increasing sequence
of finitely based upward-closed sets. The sequence eventually stabilises, and
its stable value is the set of provable nested sequents.  At that point,
finitely many cut-free proofs suffice: every other provable nested sequent is
obtained from one of them by weakening along an embedding.  Their maximum
height gives a uniform proof-height bound.
\end{abstract}

\medskip
\noindent\textbf{2020 Mathematics Subject Classification.}
Primary 03B45; Secondary 03B25, 03F03.

\smallskip
\noindent\textbf{Keywords.}
Intuitionistic modal logic, decidability, nested sequents, well-quasi-orders,
Kruskal's theorem.

\section{Introduction}

Fischer Servi's intuitionistic modal logics combine an intuitionistic
preorder with an independent modal relation, coordinated by two commuting
conditions~\cite{FischerServi1984}.
Simpson later developed the proof-theoretic and birelational presentation
used here~\cite{Simpson1994}.  For each formula $A$, the transitive extension
$\IKfour$ adds
\[
 \Box A\to\Box\Box A,
 \qquad
 \Diamond\Diamond A\to\Diamond A.
\]
The semantic clauses are familiar; the problem addressed here is the
termination of proof search.

Earlier work already separates the availability of analytic calculi from this
termination problem.  Amati and
Pirri already included $\IKfour$ in their uniform tableau and sequent
treatment~\cite{AmatiPirri1994}, and Simpson obtained strong normalization and
cut-free sequent calculi throughout his family.  Yet decidability did not
follow uniformly.  Simpson's procedure covered the systems listed in his Fig.~7--5, but left
$\IKfour$, $\mathsf{IKD4}$, and $\mathsf{IS4}$ open
\cite[Fig.~7--5 and pp.~146--147]{Simpson1994}.
Transitivity was already causing the relevant difficulty: modal requirements
can be propagated arbitrarily far down a graph extension, and the extensions
used to control backward search are no longer bounded.  Nested and deep
sequent methods for modal and tense logics already have a substantial
proof-theoretic history~\cite{Brunnler2009,GorePostnieceTiu2011,TiuIanovskiGore2012}.
In the intuitionistic setting, Kuznets and Stra{\ss}burger covered the whole
modal cube and isolated the non-termination peculiar to the transitive and
Euclidean cases~\cite{KuznetsStrassburger2019}.  Lyon then used structural
refinement to obtain the cut-free propagation
calculus used below, while still describing decidability of transitive
extensions of $\mathsf{IK}$ as a longstanding open
problem~\cite[p.~424]{Lyon2021}.

Several earlier decidability results come close enough to require a precise
boundary.  Alechina and Shkatov obtain decidability when the relevant frame
conditions can be presented as an acyclic family of monadic-second-order
definable closure conditions, subject to their condition on the number of
closure conditions associated with each relation.  This does not cover
Simpson's frames: in their notation they explicitly give
$R_2\circ R\subseteq R\circ R_2$ as a non-example for their method
\cite[p.~229]{AlechinaShkatov2006}.  Identifying $R$ with the intuitionistic
preorder and $R_2$ with the modal accessibility relation, this is exactly
the commuting condition F1 used here.

Garg, Genovese and Negri obtained a constructive decision procedure for the
necessitation-only fragment of intuitionistic $\mathsf{K4}$, and explicitly
left possibility modalities outside the scope of their intuitionistic method
\cite{GargGenoveseNegri2012}.  Their result therefore does not decide
Simpson's full $\IKfour$, whose language contains both $\Box$ and
$\Diamond$, and whose transitive extension includes both
$\Box A\to\Box\Box A$ and $\Diamond\Diamond A\to\Diamond A$.
Voorneveld's recent forward procedure is likewise formulated for unary
$\mathsf K$-modalities and transformation axioms between them, including
axiom $4$; it does not add the intuitionistic possibility operator required
for the full Simpson system~\cite{Voorneveld2025}.

Galmiche and Salhi give decision procedures for $\mathsf{IK}$,
$\mathsf{IT}$, $\mathsf{IB4}$, and $\mathsf{ITB}$; $\IKfour$ is not
among the systems for which they establish decidability
\cite{GalmicheSalhi2018}.  A recent contribution proves decidability of
$\mathbf{LIK4}$~\cite{BalbianiGencerTinchev2025}, but this is a different
logic: $\mathbf{LIK}$ is incomparable with Simpson's $\mathsf{IK}$, and
its box and diamond are interpreted locally on frames satisfying both forward
and downward confluence~\cite{BalbianiGaoGencerOlivetti2024}.  Thus analytic
calculi for Simpson's basic transitive system were already available; what
remained open was a terminating decision procedure for the full
$\Box$--$\Diamond$ logic.

Filtration does not immediately settle the matter either.  Quotienting a
transitive birelational model by a simulation need not preserve transitivity,
while taking the transitive closure of the induced modal relation may introduce
modal obligations absent from the original model.  Simpson's finite-model
construction stops short of the transitive case because the relevant contexts
can no longer be bounded~\cite[pp.~169, 174--175]{Simpson1994}.  The
example after the semantic clauses shows where this direct construction
breaks.

Here the choice of calculus matters.  We use Lyon's monomodal $\IKfour$
calculus because structural refinement
exposes the part of backward search that remains unbounded.  Transitivity is
represented by propagation along the modal tree already present in the
sequent, rather than by a structural rule that changes that tree.  In the
set-based presentation used below, fixing the end formula leaves only the
modal tree unbounded.  We import the labelled completeness theorem and the structural-refinement
results stated below, spell out the passage to the nested presentation used
here, and prove termination there.

The calculus itself suggests the order to use on nested sequents.  Weakening permits
additional input formulae, while transitivity makes the subdivision of a modal
edge harmless.  This leads to the rooted homeomorphic order on nested sequents.  Derivability is monotone for this order.  Kruskal's labelled Tree
Theorem~\cite{Kruskal1960} then rules out infinite bad sequences;
Nash--Williams's formulation makes explicit the underlying relation of
homeomorphism onto a subtree~\cite{NashWilliams1963}.

Kruskal gives finite bases for the resulting upward-closed sets.  It does
not, however, tell us how to compute backward rule applications inside
unbounded contexts.  As in backward coverability for well-structured transition
systems~\cite{FinkelSchnoebelen2001}, the well-quasi-order is algorithmically
useful only once predecessor bases can be computed.  The finite-support lemma
below supplies the bound needed to compute them for these rules.

Consider a single inference.  A minimal conclusion need retain only the
positions used by the rule, the images of the chosen basis elements, and the
branch points joining them.  Side branches outside this
finite part may be deleted, and unmarked vertices of degree $2$ suppressed.
The resulting bound reduces predecessor computation to a finite search;
every candidate can be checked effectively against the displayed rules.
Backward closure from the initial sequents can then be carried out on finite
bases, producing an increasing chain whose eventual value is the set of
provable nested sequents.

\section{Semantic and proof-theoretic setting}

Let $p$ range over propositional variables.  The language is generated by
\[
 A ::= p\mid\bot\mid A\wedge A\mid A\vee A\mid A\to A
       \mid \Box A\mid\Diamond A.
\]
A birelational frame is $(W,\leq,R)$, where $W$ is non-empty, $\leq$ is a preorder and
\begin{align*}
\tag{F1}
 xRy\ \&\ y\leq z
 &\Longrightarrow \exists u\,(x\leq u\ \&\ uRz),\\
\tag{F2}
 x\leq y\ \&\ xRz
 &\Longrightarrow \exists u\,(yRu\ \&\ z\leq u).
\end{align*}
A frame for $\IKfour$ additionally has transitive $R$.  Valuations are upward
closed along $\leq$.  Write $x\Vdash A$ when $A$ is true at $x$; the modal
clauses are
\begin{align*}
 x\Vdash\Diamond A
 &\Longleftrightarrow \exists y\,(xRy\ \&\ y\Vdash A),\\
 x\Vdash\Box A
 &\Longleftrightarrow
 \forall y,z\,(x\leq y\ \&\ yRz\Rightarrow z\Vdash A).
\end{align*}

These are Fischer Servi's birelational clauses in Simpson's formulation~\cite{FischerServi1984,Simpson1994}.  Persistence of $\Diamond$ uses F2.  Persistence of $\Box$ is already built into its universal clause; F1 will instead be used when an assignment is lifted against the direction of a modal edge.

\medskip
\noindent\emph{Example.}
Consider the finite frame on $a,u,v,y,t$ in which $\leq$ is reflexive with the
single additional comparison $a\leq u$, and
\[
  R=\{(u,v),(y,t)\}.
\]
The relation $R$ is transitive.  Condition F1 holds because the target of
either $R$-edge has no proper $\leq$-extension, so the same source witnesses
the conclusion.  For F2, the only non-reflexive comparison is $a\leq u$, and
there is no $R$-edge with source $a$; the remaining cases are witnessed by the
original target.  Let $p$ be true at $v$ and $y$, and nowhere else.  The
valuation is upward closed.  By the box clause, $\Box p$ is true at $a$,
while $p$ is false at $t$.

Suppose that a finite quotient identifies $v$ and $y$ because the invariant
used by the quotient does not retain their different modal incidences.  Before
quotienting, $yRt$ is irrelevant to the box at $a$: there is no chain
$a\leq y$ and $yRt$.  In the quotient, however,
\[
  [u]R[v],
  \qquad [v]=[y],
  \qquad [y]R[t].
\]
The induced relation is no longer transitive; its transitive closure adds
$[u]R[t]$.  Since $[a]\leq[u]$, the closure has produced
$[a]\leq[u]R[t]$, and the box at $[a]$ now demands $p$ at $[t]$.  The
quotient has forgotten which representative carried which modal incidence.

This does not rule out a more careful finite-model construction.  It shows
only why the procedure of first quotienting the frame and then taking the
transitive closure of the induced modal relation is unsafe: identification can
erase the provenance of $R$-edges, while closure can turn the resulting
composable edges into new box obligations.
\medskip

\begin{lemma}\label{lem:persistence}
If $x\leq y$ and $x\Vdash A$, then $y\Vdash A$.
\end{lemma}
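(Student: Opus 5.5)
We prove the statement by induction on the structure of $A$, for all $x\leq y$ simultaneously. The clauses for the non-modal connectives are the standard intuitionistic ones: conjunction and disjunction are truth-functional at a point, $\bot$ holds nowhere, and
\[
 x\Vdash B\to C \iff \forall z\,(x\leq z\ \&\ z\Vdash B\Rightarrow z\Vdash C).
\]
For a propositional variable $p$, the claim is exactly the assumption that valuations are upward closed along $\leq$. The case $\bot$ is vacuous. The cases $B\wedge C$ and $B\vee C$ follow directly from the induction hypothesis applied to the relevant component(s).

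For implication, no induction hypothesis is needed; we use transitivity of the preorder. Suppose $x\Vdash B\to C$ and $x\leq y$. Take any $z$ with $y\leq z$ and $z\Vdash B$. Then $x\leq z$, so $z\Vdash C$, and hence $y\Vdash B\to C$.

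For $\Box B$, again only transitivity of $\leq$ is used, which is why the paper says persistence of $\Box$ is built into the clause. Suppose $x\Vdash\Box B$ and $x\leq y$. Given $y\leq y'$ and $y'Rz$, we have $x\leq y'$, so $z\Vdash B$. Therefore $y\Vdash\Box B$.

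The only case that uses frame structure, and the one I expect to be the main point, is $\Diamond B$. Suppose $x\Vdash\Diamond B$ and $x\leq y$. Pick $z$ with $xRz$ and $z\Vdash B$. By F2, applied to $x\leq y$ and $xRz$, there is $u$ with $yRu$ and $z\leq u$. The induction hypothesis for $B$, applied to $z\leq u$, gives $u\Vdash B$, so $y\Vdash\Diamond B$.

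Neither F1 nor transitivity of $R$ is needed, so the lemma holds for all birelational frames, in particular for $\IKfour$ frames.
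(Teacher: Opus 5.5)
Your proof is correct and follows the paper's argument essentially line for line. It uses induction on $A$, handles implication and $\Box$ by transitivity of $\leq$ alone, and handles $\Diamond$ by F2 together with the induction hypothesis at the witness; your remark that neither F1 nor transitivity of $R$ is needed also matches the paper's comment before the lemma.
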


\begin{proof}
The proof is by induction on $A$.  The propositional cases are standard.  For
implication, if $x\Vdash B\to C$, $x\leq y\leq z$, and $z\Vdash B$, then
$z\Vdash C$ by the clause at $x$.

If $x\Vdash\Diamond B$, choose $z$ with $xRz$ and $z\Vdash B$.  From
$x\leq y$ and $xRz$, F2 gives $u$ with $yRu$ and $z\leq u$.  The induction
hypothesis yields $u\Vdash B$, hence $y\Vdash\Diamond B$.  Finally, suppose
$x\Vdash\Box B$ and $x\leq y$.  If $y\leq y'$ and $y'Rz$, then
$x\leq y'$, so the box clause at $x$ gives $z\Vdash B$.
\end{proof}

\begin{lemma}[Transitivity and axiom $4$]
\label{lem:semantic-four}
On every Fischer--Servi frame with transitive $R$, both
\[
 \Box A\to\Box\Box A
 \qquad\text{and}\qquad
 \Diamond\Diamond A\to\Diamond A
\]
are valid.
\end{lemma}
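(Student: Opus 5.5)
We prove validity directly from the semantic clauses. In an intuitionistic setting, validity of an implication $B\to C$ means: for every $x$ and every $y\geq x$, if $y\Vdash B$ then $y\Vdash C$. So for each axiom we fix a world $y$ satisfying the antecedent and show that it satisfies the consequent; no detour through $x$ is needed. The two cases use transitivity of $R$ in different ways. For $\Box$ we also need the preorder $\leq$ and the frame condition F1. For $\Diamond$, transitivity of $R$ alone suffices.

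\emph{The $\Box$ case.} Assume $y\Vdash\Box A$. To show $y\Vdash\Box\Box A$, take $y\leq y_1$ and $y_1Rz$, and show $z\Vdash\Box A$. For this, take $z\leq z'$ and $z'Rw$, and show $w\Vdash A$. This is the main obstacle: the chain $y\leq y_1\,R\,z\leq z'\,R\,w$ contains an intuitionistic step $z\leq z'$ between two modal steps, so transitivity of $R$ cannot be applied directly. We remove the middle step with F1. From $y_1Rz$ and $z\leq z'$, F1 gives $u$ with $y_1\leq u$ and $uRz'$. Transitivity of $R$ applied to $uRz'$ and $z'Rw$ gives $uRw$. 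Since $\leq$ is transitive, $y\leq y_1\leq u$ yields $y\leq u$. The box clause at $y$, applied to $y\leq u$ and $uRw$, then gives $w\Vdash A$, as required. We do not need Lemma~\ref{lem:persistence} here, because the universal box clause already accounts for the upward step from $y$.

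\emph{The $\Diamond$ case.} Assume $y\Vdash\Diamond\Diamond A$. Then there is $z$ with $yRz$ and $z\Vdash\Diamond A$, and hence there is $w$ with $zRw$ and $w\Vdash A$. Transitivity of $R$ gives $yRw$, so $y\Vdash\Diamond A$ by the diamond clause. Neither F1 nor F2 is used in this case.

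Both implications therefore hold at every world of every Fischer--Servi frame with transitive $R$, under every upward-closed valuation, which proves the lemma.
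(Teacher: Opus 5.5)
Your proposal is correct and follows the paper's proof essentially step for step. You reduce validity to a pointwise implication at an arbitrary world, use F1 on $y_1Rz\leq z'$ to obtain $y_1\leq u\,R\,z'$, and then close with transitivity of $R$ and of $\leq$ and the box clause; the diamond case uses transitivity of $R$ alone.
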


\begin{proof}
It is enough to establish the two pointwise implications at an arbitrary
world.  Indeed, if $x\Vdash B$ implies $x\Vdash C$ for every $x$, then every
world forces $B\to C$: at an arbitrary $\leq$-extension one applies the same
pointwise implication.

Fix $x$ and assume $x\Vdash\Box A$.  To prove $x\Vdash\Box\Box A$, let
$x\leq y$ and $yRz$; it remains to show $z\Vdash\Box A$.  Take
$z\leq z'$ and $z'Ru$.  By F1, applied to $yRz\leq z'$, there is $y'$ with
$y\leq y'$ and $y'Rz'$.  Transitivity of $R$ gives $y'Ru$.  Since
$x\leq y\leq y'$ and $x\Vdash\Box A$, the box clause yields $u\Vdash A$.
Thus $z\Vdash\Box A$, and hence $x\Vdash\Box\Box A$.

For the diamond principle, assume $x\Vdash\Diamond\Diamond A$ and choose $y,z$ with
$xRy$, $yRz$, and $z\Vdash A$.  Transitivity gives $xRz$, so
$x\Vdash\Diamond A$.
\end{proof}

We write $\IKfour$ for the set of formulas valid on these frames.  Simpson's
completeness theorem shows that these are exactly the theorems of the
Hilbert-style extension of intuitionistic modal $\mathsf{K}$ by the two
displayed principles of axiom $4$~\cite{Simpson1994}.

\subsection{Why the nested presentation matters}

What matters here is not merely that Lyon's calculus is cut free.  After
structural refinement, transitivity becomes propagation along a path already
present in the nested sequent; it no longer appears as a structural rule that
alters the modal tree.  The unbounded part of backward search is therefore
visible in the proof object itself.  Once the end formula is fixed and input
repetitions are erased, node labels range over a finite set, while the tree may
still grow.  This is the object ordered by homeomorphic embedding below.

The predecessor argument uses precisely this locality.  Each
inference acts at finitely many nodes, and propagation refers to the rest
of the tree only through strict descendancy.  We display the rules in full
because the finite-support argument uses this local shape, especially what
happens to the conclusion output in the first premise of $\to^{\bullet}$.

\subsection{The nested calculus NIK4}

The schemes below are the monomodal $\IKfour$ instance of Lyon's refined
nested system~\cite[pp.~421--424]{Lyon2021}.  Lyon's nested presentation reads input contexts as finite multisets.
After importing completeness at the labelled level, we pass through this
multiset presentation before repetitions are erased.  For the decision argument we erase repeated input
occurrences node by node; \cref{lem:repetitions} shows that this transforms
every multiset proof into a proof in the set-based calculus without increasing
its height.  Formula endsequents are unchanged by the erasure.  Once the
subformula vocabulary is fixed, set-valued inputs leave only finitely many
possible node labels for the Kruskal encoding.

We use signed formulae $A^{\bullet}$ and $A^{\circ}$, called input and output
formulae.  Output-free nested sequents $\Delta$ and full nested sequents
$\Sigma$ are generated simultaneously by the following grammar, where
$n,k\geq0$:
\[
 \Delta ::= A_1^{\bullet},\ldots,A_n^{\bullet},
             [\Delta_1],\ldots,[\Delta_k],
 \qquad
 \Sigma ::= \Delta,A^{\circ}\mid \Delta,[\Sigma].
\]
Thus a full nested sequent is a finite rooted modal tree with a finite set of
input formulae at each node and exactly one output occurrence in the whole
tree.  In the second clause for $\Sigma$, the displayed full child is the
unique child containing that output; every other child belongs to $\Delta$.
The comma is associative and commutative.  Input formulae at one node are
treated as a set; sibling subtrees form a finite unordered family and their
multiplicity is retained.

An input or output context is a nested sequent with holes of the corresponding
polarity.  Filling an input hole is interpreted by set union: the fixed part of
the context may already contain the formula inserted at the hole.  A displayed
conclusion therefore does not by itself determine the inference; the fixed part
of the context must also be chosen.  In particular, when a principal input
formula is already present at the active node, two applications with the same
displayed conclusion may differ over whether that formula remains in the
premise.  In Lyon's system the same context is read by multiset union.  We
call that the multiset presentation.  When the distinction matters, we keep
one multiset realization in the background; otherwise we use the displayed set
notation.  If $\mathcal C$ contains the unique
output, $\mathcal C^{\downarrow}$ denotes the same context with that output
deleted, and is therefore output-free.

For nodes $u,v$ of a nested sequent, write $u\prec v$ when $v$ is a strict
modal descendant of $u$.  The calculus $\NIKfour$ consists of the following
deep rules.  Rules may be applied inside any context of the displayed polarity.

\begin{figure}[ht]
\centering
\begin{mathpar}
\inferrule*[right=$id$]{ }{\mathcal C\{p^{\bullet},p^{\circ}\}}
\and
\inferrule*[right=$\bot^{\bullet}$]{ }{\mathcal C\{\bot^{\bullet}\}}
\\
\inferrule*[right=$\wedge^{\bullet}$]
 {\mathcal C\{A^{\bullet},B^{\bullet}\}}
 {\mathcal C\{(A\wedge B)^{\bullet}\}}
\and
\inferrule*[right=$\wedge^{\circ}$]
 {\mathcal C\{A^{\circ}\}\\\mathcal C\{B^{\circ}\}}
 {\mathcal C\{(A\wedge B)^{\circ}\}}
\\
\inferrule*[right=$\vee^{\bullet}$]
 {\mathcal C\{A^{\bullet}\}\\\mathcal C\{B^{\bullet}\}}
 {\mathcal C\{(A\vee B)^{\bullet}\}}
\and
\inferrule*[right=$\vee_i^{\circ}$]
 {\mathcal C\{A_i^{\circ}\}}
 {\mathcal C\{(A_1\vee A_2)^{\circ}\}}
 \quad i\in\{1,2\}
\\
\inferrule*[right=$\to^{\circ}$]
 {\mathcal C\{A^{\bullet},B^{\circ}\}}
 {\mathcal C\{(A\to B)^{\circ}\}}
\and
\inferrule*[right=$\to^{\bullet}$]
 {\mathcal C^{\downarrow}\{(A\to B)^{\bullet},A^{\circ}\}
  \\
  \mathcal C\{B^{\bullet}\}}
 {\mathcal C\{(A\to B)^{\bullet}\}}
\\
\inferrule*[right=$\Diamond^{\bullet}$]
 {\mathcal C\{[A^{\bullet}]\}}
 {\mathcal C\{(\Diamond A)^{\bullet}\}}
\and
\inferrule*[right=$\Box^{\circ}$]
 {\mathcal C\{[A^{\circ}]\}}
 {\mathcal C\{(\Box A)^{\circ}\}}
\end{mathpar}
\caption{The logical rules of $\NIKfour$.}
\label{fig:nik4-logical}
\end{figure}

The two remaining modal rules are propagation rules.  The notation
$\mathcal C\{\cdot\}_u\{\cdot\}_v$ identifies the nodes at which the two
holes occur; $\Delta_1,\Delta_2$ are the output-free nested sequents placed
at those nodes.

\begin{figure}[ht]
\centering
\begin{mathpar}
\inferrule*[right=$p_{\Diamond}$]
 {\mathcal C\{\Delta_1\}_u\{A^{\circ},\Delta_2\}_v}
 {\mathcal C\{(\Diamond A)^{\circ},\Delta_1\}_u
              \{\Delta_2\}_v}
\quad u\prec v
\and
\inferrule*[right=$p_{\Box}$]
 {\mathcal C\{(\Box A)^{\bullet},\Delta_1\}_u
              \{A^{\bullet},\Delta_2\}_v}
 {\mathcal C\{(\Box A)^{\bullet},\Delta_1\}_u
              \{\Delta_2\}_v}
\quad u\prec v.
\end{mathpar}
\caption{Propagation along non-empty modal paths.}
\label{fig:nik4-prop}
\end{figure}

The second transitivity principle gives a small illustration.  Read upwards,
\[
\begin{aligned}
(\Diamond\Diamond p\to\Diamond p)^{\circ}
&\overset{\to^{\circ}}{\Longrightarrow}
  (\Diamond\Diamond p)^{\bullet},(\Diamond p)^{\circ}\\
&\overset{\Diamond^{\bullet}}{\Longrightarrow}
  (\Diamond p)^{\circ},[(\Diamond p)^{\bullet}]\\
&\overset{\Diamond^{\bullet}}{\Longrightarrow}
  (\Diamond p)^{\circ},[[p^{\bullet}]]\\
&\overset{p_{\Diamond}}{\Longrightarrow}
  [[p^{\bullet},p^{\circ}]].
\end{aligned}
\]
The last step uses the two-edge path from the root to the innermost node, and
that branch closes by $id$.

For a rule schema $r$, write $C\Longrightarrow_r(P_1,\ldots,P_k)$ for an
inference obtained by applying $r$, with conclusion $C$ and premises
$P_1,\ldots,P_k$; $k=0$ denotes a nullary inference.

The query for $A$ is represented by the one-node sequent $A^{\circ}$.  Read
upwards, the only rules which add a node are $\Diamond^{\bullet}$ and
$\Box^{\circ}$; in either case the fresh child carries the displayed
principal subformula.  No rule creates an empty modal child.

\subsection{Semantics of nested sequents}

Let $S$ be a full nested sequent with node set $V_S$, root $r_S$, modal edge
relation $E_S$, input set $\Gamma_S(v)$ at each node $v\in V_S$, and unique
output $A^{\circ}$ at the node $o_S$.  Its labelled-tree translation is the
labelled sequent
\[
 \{vRw:(v,w)\in E_S\},
 \{v:B\mid v\in V_S,\ B\in\Gamma_S(v)\}
 \ \vdash\ o_S:A.
\]
The sequent $S$ is valid when, for every birelational model and every assignment
$I:V_S\to W$ respecting all modal edges, truth of every input formula at its
assigned node implies truth of the output formula at $I(o_S)$.

Most rules are local.  Implication and box are not: falsifying them may
require shifting the whole modal assignment to a later intuitionistic world.

\begin{lemma}[Tree lifting]
\label{lem:tree-lifting}
Let $T$ be a finite modal tree and let $I$ respect its edges in a Fischer--Servi
frame.  If $I(v)\leq x$, then there is an edge-respecting assignment $I'$ such
that $I'(v)=x$ and $I(u)\leq I'(u)$ for every node $u$ of $T$.
\end{lemma}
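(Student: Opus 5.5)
Since $T$ is a finite rooted tree, I will reroot it at $v$ and build $I'$ outward from $v$. For every edge I use the frame condition suited to that edge's direction relative to the path back to $v$. First set $I'(v)=x$; by hypothesis $I(v)\le x$. Then extend $I'$ one node at a time along a traversal of $T$ outward from $v$. At each stage I maintain two invariants on the nodes already assigned. First, $I(u)\le I'(u)$. Second, every modal edge of $T$ between two assigned nodes is respected by $I'$. Every node $w\neq v$ is reached through a unique neighbour $u$ on the path towards $v$, and $u$ is already assigned when $w$ is processed. Because $T$ is a tree, this edge is the only constraint between $w$ and the assigned part, so defining $I'(w)$ only has to respect that one edge.

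There are two cases, according to the direction of the tree edge between $u$ and $w$.

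\emph{Downward case:} $w$ is a modal child of $u$. Then $I(u)RI(w)$ and $I(u)\le I'(u)$. Condition F2 gives some $t$ with $I'(u)Rt$ and $I(w)\le t$, and I set $I'(w)=t$.

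\emph{Upward case:} $w$ is the modal parent of $u$. Then $I(w)RI(u)$ and $I(u)\le I'(u)$. Condition F1 gives some $t$ with $I(w)\le t$ and $tRI'(u)$, and I set $I'(w)=t$.

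In both cases the invariants are preserved. When the traversal ends, every node has been assigned and every edge has been checked exactly once, which gives the lemma.

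The only delicate point is the upward case, namely nodes on the path from $v$ to the root. Persistence-style reasoning via F2 alone would lift forwards only, so the ancestors of $v$ need F1. This is precisely the remark after \cref{lem:persistence}, that F1 is used when an assignment is lifted against the direction of a modal edge. Transitivity of $R$ plays no role, and neither does any property beyond F1 and F2.

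The construction needs no compatibility check between branches, because each node has exactly one path back to $v$. Finiteness of $T$ makes the traversal terminate, and it also means only finitely many choices of witnesses are made.
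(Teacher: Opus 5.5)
Your proof is correct and takes the same approach as the paper: reroot the tree at $v$, put $I'(v)=x$, and extend outward, using F2 when the next node is an original modal child and F1 when it is the original modal parent. Your instantiations of F1 and F2 match the paper's exactly, and you are right that transitivity of $R$ plays no role, since the lemma holds on arbitrary Fischer--Servi frames.
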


\begin{proof}
Root the underlying undirected tree temporarily at $v$ and define $I'$ by
induction on the distance from $v$.  Put $I'(v)=x$.  Suppose the value of a
node has been fixed and consider an adjacent node not yet assigned.

If $s$ is assigned and $t$ is an original child of $s$, then
$I(s)\leq I'(s)$ and $I(s)RI(t)$.  Applying F2 to these two relations gives a
world $I'(t)$ such that
\[
 I'(s)RI'(t),
 \qquad I(t)\leq I'(t).
\]
If instead $t$ is assigned and $s$ is its original parent, then
$I(s)RI(t)\leq I'(t)$.  F1 gives a world $I'(s)$ with
\[
 I(s)\leq I'(s),
 \qquad I'(s)RI'(t).
\]
Every node has a unique predecessor in the temporary rooting at $v$, so it is
assigned exactly once.  Since $T$ is finite, every node is reached after
finitely many steps, and every original modal edge is respected in its
original direction.
\end{proof}

\begin{theorem}[Soundness]
\label{thm:nik4-sound}
Every sequent derivable in $\NIKfour$ is valid on transitive Fischer--Servi
frames.
\end{theorem}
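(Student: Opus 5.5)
We argue by induction on the height of a $\NIKfour$ derivation, showing that each rule preserves validity: if every premise is valid, so is the conclusion. Equivalently, we prove the contrapositive rule by rule. Fix a transitive birelational model and an edge-respecting assignment $I$ that makes every input of the conclusion true but the output false. From these we produce an edge-respecting assignment falsifying some premise. The axioms are immediate: at the node carrying $p^{\bullet},p^{\circ}$ the output $p$ is true whenever the input is, and $\bot^{\bullet}$ is never true. All trees are finite, so \cref{lem:tree-lifting} applies to every nested sequent considered.

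The local propositional rules $\wedge^{\bullet},\wedge^{\circ},\vee^{\bullet},\vee_i^{\circ}$ use the same assignment $I$ and unfold the forcing clause at the active node. For $\Diamond^{\bullet}$, truth of $\Diamond A$ at $I(u)$ gives a world $w$ with $I(u)Rw$ and $w\Vdash A$. We extend $I$ to the fresh child by sending it to $w$; the extended assignment respects all edges and falsifies the premise, since the output is unchanged.

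For $p_{\Diamond}$, the premise has output $A$ at $v$. Suppose the premise held under $I$. Then $I(v)\Vdash A$. Since $u\prec v$, the path of modal edges together with transitivity of $R$ gives $I(u)RI(v)$, so $I(u)\Vdash\Diamond A$, contradicting falsity of the conclusion's output. Hence $I$ already falsifies the premise. For $p_{\Box}$, the box clause at $I(u)$, instantiated with $y=I(u)$ and $z=I(v)$ (again $I(u)RI(v)$ by transitivity), gives $I(v)\Vdash A$. So $I$ satisfies the extra input of the premise and falsifies it. This is the only place transitivity is used.

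The non-local rules are where \cref{lem:tree-lifting} and \cref{lem:persistence} are needed. For $\to^{\circ}$: falsity of $A\to B$ at $I(u)$ gives $x\geq I(u)$ with $x\Vdash A$ and $x\nVdash B$. Lift $I$ to $I'$ with $I'(u)=x$ and $I(w)\leq I'(w)$ everywhere. Persistence keeps all old inputs true, and $I'$ falsifies the premise. For $\Box^{\circ}$: falsity of $\Box A$ gives $x\geq I(u)$ and $z$ with $xRz$ and $z\nVdash A$. Lift to $I'$ with $I'(u)=x$, then send the fresh child to $z$. For $\to^{\bullet}$: if $I(u)\Vdash A$ then, since $I(u)\Vdash A\to B$, also $I(u)\Vdash B$, so $I$ falsifies the second premise $\mathcal C\{B^{\bullet}\}$, whose output is unchanged. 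Otherwise $I$ falsifies the first premise $\mathcal C^{\downarrow}\{(A\to B)^{\bullet},A^{\circ}\}$: all inputs remain true, and the new output $A$ fails at $I(u)$. Deleting the old output is harmless.

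Finally, the set reading of input holes causes no trouble: a premise's inputs form a subset of the conclusion's inputs plus the newly displayed formulae, and those are exactly what we verified. We expect the main obstacle to be the lifting cases $\to^{\circ}$ and $\Box^{\circ}$. There one must check that moving the whole assignment up the preorder preserves both the edges and the inputs; this requires the combination of F1, F2 and persistence already packaged in \cref{lem:tree-lifting}.
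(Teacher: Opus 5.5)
Your proposal is correct and follows essentially the same route as the paper. Both argue contrapositively rule by rule: a falsifying assignment for the conclusion is turned into one falsifying a premise. Both use the same case treatments: tree lifting plus persistence for $\to^{\circ}$ and $\Box^{\circ}$, a reflexivity-based case split for $\to^{\bullet}$, extension to the fresh child for $\Diamond^{\bullet}$, and transitivity along the strict-descendant path for $p_{\Diamond}$ and $p_{\Box}$.
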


\begin{proof}
For a rule with premises $P_1,\ldots,P_k$, it suffices to prove the
following contrapositive statement: every assignment which falsifies the
conclusion can be turned into an assignment which falsifies at least one
premise.  When $k=0$ there is no such assignment.  An $id$-conclusion would
require the same atom to be true as an input and false as the output at one
node, while a $\bot^{\bullet}$-conclusion would require $\bot$ to be true.
For the propositional rules other than implication, the claim follows
directly from the truth clauses.  In the branching cases, truth of an input
disjunction or falsity of an output conjunction selects the required
premise.

Let $I$ falsify a conclusion whose principal formula is
$(A\to B)^{\circ}$ at a node $u$, and put $x=I(u)$.  There is $y\geq x$ with
$y\Vdash A$ and $y\nVdash B$.  By \cref{lem:tree-lifting}, $I$ lifts to an
edge-respecting assignment $I'$ such that $I'(u)=y$ and
$I(v)\leq I'(v)$ at every inherited node.  Persistence keeps all inherited inputs true;
the added input $A^{\bullet}$ is true at $u$, whereas the new output
$B^{\circ}$ is false there.  Thus the premise is falsified.

Suppose next that $(A\to B)^{\bullet}$ is principal at $u$.  The conclusion
assignment makes $A\to B$ true at $I(u)$.  By reflexivity of $\leq$, either
$I(u)\nVdash A$ or $I(u)\Vdash B$: if $A$ is true, the implication clause at
$I(u)$ forces $B$.  In the first case delete the previous output and make
$A^{\circ}$ the output at $u$; all inputs, including
$(A\to B)^{\bullet}$, remain true, so the first premise is false.  In the second case use the second premise: its distinguished
$(A\to B)^{\bullet}$ occurrence is replaced by $B^{\bullet}$, while any
equal occurrence belonging to the side context is retained.  The previous false output is unchanged, so this premise is falsified.

If $(\Diamond A)^{\bullet}$ is principal at $u$, choose $y$ with
$I(u)Ry$ and $y\Vdash A$.  Extending $I$ by assigning $y$ to the fresh child
makes every premise input true and leaves the previous output false.  If
$(\Box A)^{\circ}$ is principal at $u$, its falsity gives worlds $y,z$ with
$I(u)\leq y$, $yRz$, and $z\nVdash A$.  Lift the original assignment at $u$ to
$y$, use persistence for every inherited input, and assign $z$ to the fresh child.
The premise output $A^{\circ}$ is then false.

Propagation deserves a separate line: strict descendancy in the nested tree
must be read as a genuine non-empty $R$-path.  If $u\prec v$, write the
unique modal path
as
\[
 u=v_0\to v_1\to\cdots\to v_n=v \qquad(n\geq1).
\]
An edge-respecting assignment gives
$I(v_0)R I(v_1)R\cdots R I(v_n)$; induction on $n$, using transitivity,
yields $I(u)R I(v)$.  Hence falsity of $\Diamond A$ at $I(u)$ implies
falsity of $A$ at $I(v)$, which falsifies the premise of $p_{\Diamond}$.
Likewise, if $\Box A$ is true at $I(u)$, reflexivity gives
$I(u)\leq I(u)$, and together with $I(u)R I(v)$ the box clause yields
$I(v)\Vdash A$.  Adding $A^{\bullet}$ at $v$ therefore falsifies the premise
of $p_{\Box}$.  These cases exhaust the rules.
\end{proof}

\subsection{Cut-free completeness}

Lyon parameterizes structural refinement by the Horn--Scott--Lemmon schemes
\[
 \varphi(n,k):=
 (\Diamond^{n}\Box A\to\Box^{k}A)\wedge
 (\Diamond^{k}A\to\Box^{n}\Diamond A),
 \qquad n,k\in\mathbb N.
\]
Here $\mathbb N$ includes $0$, and a zeroth modal power is read as
the identity: $\Diamond^{0}B=\Box^{0}B=B$.  For $n=0$ and $k=2$ this gives
\[
 \varphi(0,2)=
 (\Box A\to\Box\Box A)\wedge
 (\Diamond\Diamond A\to\Diamond A),
\]
exactly the two transitivity principles used here.  With $R^2=R\circ R$, the
frame condition associated with $\varphi(0,2)$ is $R^2\subseteq R$.  We
therefore set $\mathcal A_4=\{\varphi(0,2)\}$.

We take completeness from Lyon's labelled calculus.  Lyon's Theorem~4 applies to
$\mathcal A_4$-valid labelled sequents, whereas the decision argument below
uses the set-based nested calculus just displayed.  Theorem~4 supplies the
labelled proof; we translate it into Lyon's multiset nested presentation and
erase repetitions only afterwards, without appealing directly to the later
formula-level nested completeness statement.

Lyon's definition of a labelled sequent permits both the relational
multiset and the antecedent multiset to be empty.  Hence Theorem~4 applies
directly to the formula endsequent
\[
 \varnothing,\varnothing\vdash w:A.
\]
The only degenerate point arises one step later.  Lyon's literal definition of
a labelled tree sequent requires every formula label to occur in its relational
tree, whereas the endsequent above has no relational atom.  His translation
nevertheless contains an explicit case for an empty relational component, and
his formula-level nested completeness theorem presupposes the same convention
\cite[Definitions~14 and~16 and Theorem~6]{Lyon2021}.  We therefore read the
endsequent as the evident one-vertex rooted tree and verify the
labelled-to-nested passage below with that case included.

For $\mathcal A_4$, structural refinement uses the grammar
\[
 g(\mathcal A_4)=
 \{\Diamond\longrightarrow\Diamond\Diamond,
   \ \overline\Diamond\longrightarrow
      \overline\Diamond\,\overline\Diamond\}.
\]
Here $\overline\Diamond$ is the converse propagation symbol, used when a
relational edge is traversed against its displayed direction.  Since the
propagation rules begin with $\Diamond$, the relevant language is
\[
 L_{g(\mathcal A_4)}(\Diamond)
   =\{\Diamond^m:m\geq1\}.
\]
Their side condition is therefore a non-empty forward path.  The boundary
value $n=0$ does not license the empty word.

At the relational level, Lyon reads $wR^{0}u$ as $w=u$.  Thus the only
structural rule associated with $\mathcal A_4$ is
\[
\inferrule*[right=$S_{0,2}$]
 {\mathcal R,wRx,xRv,wRv,\Gamma\vdash z:C}
 {\mathcal R,wRx,xRv,\Gamma\vdash z:C}.
\]
If a propagation path does not use the shortcut $wRv$, this rule permutes over
the propagation inference immediately.  If it does, replace each occurrence
of that shortcut by the two-edge path $wRxRv$.  On path words this replaces
one occurrence of $\Diamond$ by $\Diamond\Diamond$, and the resulting word
still belongs to $L_{g(\mathcal A_4)}(\Diamond)$.  The same argument applies to
both propagation rules.  This is the $n=0$, $k=2$ instance of Lyon's
permutation lemmas and of the elimination procedure in
Theorem~3~\cite[Lemmas~1--2 and Theorem~3]{Lyon2021}.

For a labelled sequent
$\Lambda=\mathcal R,\Gamma\vdash u:C$, consider the directed graph whose
vertices are all labels occurring in $\mathcal R$, $\Gamma$, or $u:C$, with
an edge $x\to y$ whenever $xRy$ occurs in $\mathcal R$.  We say that
$\Lambda$ is \emph{tree-shaped with root $r$} when no relational atom is
repeated and this graph is a finite tree directed away from $r$.  The
one-vertex case $\mathcal R=\varnothing$ is included.  For such a sequent,
write $N_r(\Lambda)$ for the nested sequent obtained by placing at each vertex
$x$ one input occurrence $B^{\bullet}$ for every occurrence of $x:B$ in
$\Gamma$, placing the unique output $C^{\circ}$ at $u$, and nesting the
children according to this tree.

\begin{lemma}[From labelled to nested proofs]
\label{lem:labelled-to-nested}
Let $\mathcal D$ be a proof in Lyon's refined labelled calculus
$\mathsf{IK}(\mathcal A_4)\mathsf L$ whose conclusion $\Lambda$ is
tree-shaped with root $r$.  Then every sequent in $\mathcal D$ is tree-shaped
with the same root, and $\mathcal D$ can be transformed
effectively into a proof of $N_r(\Lambda)$ in the multiset presentation.  The
transformation preserves proof height.
\end{lemma}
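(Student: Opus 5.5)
We argue by induction on the height of $\mathcal D$, proceeding from the conclusion upwards. The first claim, that tree-shapedness with root $r$ is inherited by premises, is checked rule by rule in Lyon's refined calculus $\mathsf{IK}(\mathcal A_4)\mathsf L$. The decisive point is that, after structural refinement, the only structural rule $S_{0,2}$ has been eliminated (Lyon's Theorem~3 in the $n=0$, $k=2$ instance recalled above). Hence no rule read upwards adds a relational atom between existing labels. The only rules that enlarge $\mathcal R$ are the left diamond rule and the right box rule. Read upwards, each adds a single atom $wRy$ with $y$ fresh (eigenvariable condition). Adding a fresh leaf $y$ below an existing vertex $w$ preserves the property of being a finite tree directed away from $r$, and the new atom cannot repeat an existing one. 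Every other rule leaves $\mathcal R$ unchanged and introduces no new labels. The exception is the degenerate case where a label occurs in $\Gamma$ or in $u:C$ but not in $\mathcal R$; since all labels are vertices of the graph and new labels arise only with their edge, this causes no problem. Premises therefore have the same vertex set plus possibly one fresh child, so the root is unchanged. This handles the empty-relation case uniformly, because the one-vertex tree is included.

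For the transformation, we map each labelled inference to the corresponding nested rule applied in the context determined by the tree, and proceed by induction on height. Initial sequents $\mathcal R,\Gamma,x:p\vdash x:p$ and $\mathcal R,\Gamma,x:\bot\vdash u:C$ become instances of $id$ and $\bot^{\bullet}$ with $N_r$ of the sequent as the context. The propositional rules at label $x$ become the deep rules at node $x$. Two points need care here. First, for $\to^{\bullet}$ we check that the labelled left-implication rule of Lyon's intuitionistic calculus replaces the succedent by $x:A$ while keeping $x:A\to B$ in the antecedent; this is exactly $\mathcal C^{\downarrow}\{(A\to B)^{\bullet},A^{\circ}\}$. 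Second, multiset occurrences translate one-to-one, since $N_r$ places one input per antecedent occurrence. For the left diamond and right box rules, the fresh leaf $y$ under $w$ is precisely the new bracket $[A^{\bullet}]$ or $[A^{\circ}]$, so $N_r$ of the premise equals the premise of $\Diamond^{\bullet}$ or $\Box^{\circ}$.

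The propagation rules are where the main obstacle lies. Lyon's labelled propagation rules have the side condition that there is a path between $w$ and $v$ in the propagation graph of $\mathcal R$ whose word lies in $L_{g(\mathcal A_4)}(\Diamond)=\{\Diamond^m:m\ge1\}$. We must show this is exactly $w\prec v$ in $N_r$. A word consisting only of forward symbols $\Diamond$ traverses edges in their displayed direction only. In a tree directed away from $r$, a non-empty forward walk from $w$ to $v$ exists if and only if $v$ is a strict descendant of $w$. Conversely, the tree path from $w$ to $v$ with $n\ge1$ forward edges yields the word $\Diamond^n$. Once this is established, $p_{\Diamond}$ and $p_{\Box}$ match their nested counterparts verbatim. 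Each labelled inference becomes exactly one nested inference, so height is preserved, and the construction is clearly effective.
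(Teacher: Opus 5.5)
Your proof is correct and follows the paper's argument essentially step for step. Tree shape and root are preserved because, once $S_{0,2}$ is eliminated, only $\Diamond l$ and $\Box r$ add relational atoms, each attaching a fresh leaf. The translation is then rule by rule, with care for the first premise of $\to^{\bullet}$, and non-empty forward words $\Diamond^m$ are identified with strict descendancy. The one step you leave implicit is the paper's first move: it invokes Lyon's Remark~2 to replace any $\Diamond r$ or $\Box l$ instances by height-preserving single-edge propagation instances. Your propagation case already covers these, since a child is a strict descendant.
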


\begin{proof}
By Lyon's Remark~2, replace every instance of $\Diamond r$ or $\Box l$ by
the height-preserving propagation instance along its displayed
edge~\cite[Remark~2]{Lyon2021}.  Read the normalized proof upwards.
Propositional and propagation rules leave the relational graph unchanged.
The rules $\Diamond l$ and $\Box r$ each add an atom $xRy$, where
$x$ already labels the principal formula in the conclusion and $y$ is fresh.
The new edge therefore attaches a single leaf and cannot duplicate an existing
edge.
No rule changes the root.  The refined calculus for $\mathcal A_4$ has no
seriality rule, and $S_{0,2}$ has been eliminated.  Hence tree shape and root
are preserved.  This is Lyon's Lemma~3 for non-degenerate relational
trees~\cite[Lemma~3]{Lyon2021}; the same inspection covers the one-vertex
case admitted here.

An induction on the height of $\mathcal D$ carries the translation through
the proof, with each labelled sequent read by $N_r$.  For the initial rules and the propositional rules other than
$\to l$, the relational graph is unchanged and every displayed formula remains
at the same label; applying $N_r$ therefore yields the corresponding nested
rule in \cref{fig:nik4-logical}.  For $\to l$, the first
labelled premise moves the output to the node carrying the principal
implication and places $A$ there.  Its translation is
$\mathcal C^{\downarrow}\{(A\to B)^{\bullet},A^{\circ}\}$.  The second premise
keeps the previous output and replaces the distinguished implication
occurrence by $B^{\bullet}$.  These are precisely the two premises of
$\to^{\bullet}$, including deletion of the previous output in the first.

The eigenlabel rules $\Diamond l$ and $\Box r$ translate to
$\Diamond^{\bullet}$ and $\Box^{\circ}$, with the fresh label becoming the
fresh child.  A propagation step from $x$ to $y$ is licensed by a word in
$L_{g(\mathcal A_4)}(\Diamond)$, hence by a non-empty forward path; in the
rooted tree this says that $y$ is a strict descendant of $x$.  It therefore
translates to $p_{\Diamond}$ or $p_{\Box}$.  Each labelled inference becomes
one nested inference, with root, output node, and output formula preserved.
Since the list of rule schemes is finite and each translation is
syntactic, the construction is effective and height-preserving.
\end{proof}

\begin{proposition}[Completeness of the multiset presentation]
\label{prop:multiset-complete}
Every formula valid on transitive Fischer--Servi frames has a cut-free proof
in the multiset presentation of the rules in
\cref{fig:nik4-logical,fig:nik4-prop}.
\end{proposition}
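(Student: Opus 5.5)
The plan is to chain the imported labelled completeness theorem with \cref{lem:labelled-to-nested}. Let $A$ be valid on transitive Fischer--Servi frames, i.e.\ $A\in\IKfour$. First I will show that the labelled formula endsequent $\Lambda_A=\varnothing,\varnothing\vdash w:A$ is $\mathcal A_4$-valid in Lyon's sense. His labelled semantics interprets a sequent over birelational models satisfying the frame condition attached to $\mathcal A_4$. As noted above, that condition for $\varphi(0,2)$ is $R^2\subseteq R$, which is exactly transitivity of $R$. With empty relational and antecedent components, $\mathcal A_4$-validity of $\Lambda_A$ means that $A$ holds at every world of every such model. This is precisely our hypothesis, once one checks that Lyon's birelational clauses for $\Box$ and $\Diamond$ coincide with those displayed earlier, which are Simpson's clauses in both cases.

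Next I will invoke Lyon's Theorem~4 for $\mathcal A_4$. It yields a cut-free proof of $\Lambda_A$ in $\mathsf{IK}(\mathcal A_4)\mathsf L$. The discussion preceding the lemma then lets me remove $S_{0,2}$: each shortcut $wRv$ used in a propagation path is replaced by the two-edge path $wRxRv$, and otherwise $S_{0,2}$ is permuted upward. This gives a proof $\mathcal D$ in the refined calculus whose propagation side conditions are words in $L_{g(\mathcal A_4)}(\Diamond)=\{\Diamond^m:m\geq1\}$. Because this elimination is carried out by the cited Theorem~3 and Lemmas~1--2, I will only check that it does not alter the endsequent. That holds since the elimination works by induction upward and leaves the conclusion fixed.

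Then I will observe that $\Lambda_A$ is tree-shaped with root $w$. It has no relational atoms, and its label graph is the one-vertex tree, a case explicitly admitted in the definition. Applying \cref{lem:labelled-to-nested} transforms $\mathcal D$ into a proof of $N_w(\Lambda_A)$ in the multiset presentation. By definition of $N_r$, this is the one-node nested sequent $A^{\circ}$, the query for $A$. The translation sends each labelled rule to one rule of \cref{fig:nik4-logical,fig:nik4-prop}, and cut is absent from both systems, so the resulting proof is cut-free.

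The main obstacle I expect is the degenerate boundary. I have to make sure that Theorem~4 genuinely applies to an endsequent with empty relational component, and that the semantic notion of $\mathcal A_4$-validity matches our frames, including the reading $R^0$ as identity, which plays no role for $n=0$ here. The first point is settled by Lyon's definition of labelled sequents, which permits empty components. The second reduces to comparing frame conditions F1, F2 and $R^2\subseteq R$ with Lyon's $\mathcal A$-frames. I would state this comparison explicitly rather than leave it implicit.
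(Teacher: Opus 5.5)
Your proposal is correct and follows essentially the same route as the paper: $R^2\subseteq R$ makes $\varnothing,\varnothing\vdash w:A$ $\mathcal A_4$-valid, Lyon's Theorem~4 gives a cut-free labelled proof, and \cref{lem:labelled-to-nested}, applied to the one-vertex tree, yields a proof of $A^{\circ}$. The only difference is your separate elimination of $S_{0,2}$, which is redundant because Theorem~4 already produces a proof in the refined calculus $\mathsf{IK}(\mathcal A_4)\mathsf L$, where that rule does not occur.
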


\begin{proof}
Let $A$ be valid on transitive Fischer--Servi frames.  The frame condition for
$\mathcal A_4$ is $R^2\subseteq R$, so the labelled sequent
\[
 \varnothing,\varnothing\vdash w:A
\]
is $\mathcal A_4$-valid.  Lyon's Theorem~4 yields a proof in the refined
labelled calculus
$\mathsf{IK}(\mathcal A_4)\mathsf L$~\cite[Theorem~4]{Lyon2021}.
Since cut is not a rule of $\mathsf{IK}(\mathcal A_4)\mathsf L$, the resulting
labelled proof is cut free.  Its conclusion is the one-vertex tree with root
$w$, and
\cref{lem:labelled-to-nested} translates the proof into a proof of
$N_w(\varnothing,\varnothing\vdash w:A)=A^{\circ}$ in the multiset
presentation.
\end{proof}

The two presentations differ only in their treatment of repeated input
formulae.  For a nested sequent $S$ in the multiset presentation, let
$S^{\flat}$ be obtained by erasing repeated input occurrences at every node;
the modal tree, the output, and sibling multiplicities are unchanged.

\begin{lemma}[Repetitions are inessential]
\label{lem:repetitions}
If $S$ has a derivation of height $h$ in the multiset presentation, then
$S^{\flat}$ has a derivation of height at most $h$ in $\NIKfour$.
Conversely, every inference of $\NIKfour$ is obtained by erasing
repetitions from a corresponding multiset inference.
\end{lemma}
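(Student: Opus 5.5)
We argue the first claim by induction on the height $h$ of a multiset derivation $\mathcal D$ of $S$, showing that $S^{\flat}$ has an $\NIKfour$ derivation of height at most $h$. The key observation is that erasure commutes with context filling: if $\mathcal C\{\Theta\}$ is read by multiset union, then $(\mathcal C\{\Theta\})^{\flat}$ is $\mathcal C^{\flat}\{\Theta^{\flat}\}$ read by set union. Here $\mathcal C^{\flat}$ erases repetitions in the fixed part of the context, and $\Theta^{\flat}$ keeps the inserted formulae as a set. Erasure also acts node by node and leaves the tree, the output, the relation $\prec$, and $\mathcal C^{\downarrow}$ untouched. We therefore take the last inference $S\Longrightarrow_r(P_1,\ldots,P_k)$ of $\mathcal D$, apply erasure to its conclusion and to each premise, and check that $S^{\flat}\Longrightarrow_r(P_1^{\flat},\ldots,P_k^{\flat})$ is an instance of the same schema in the set-based reading. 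The induction hypothesis then supplies derivations of the $P_i^{\flat}$ of height at most $h-1$, and the claim follows.

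The check runs through \cref{fig:nik4-logical,fig:nik4-prop}. For $id$ and $\bot^{\bullet}$ the erased conclusion still contains $p^{\bullet},p^{\circ}$ (respectively $\bot^{\bullet}$) at the same node, so it is again an axiom. The propositional rules, $\Diamond^{\bullet}$ and $\Box^{\circ}$ need only the commutation observation. In the set reading, a principal input formula that also occurs in the side context simply stays in the premise, and this is exactly what the erased multiset premise shows.

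The one delicate point is a rule whose premise erasure happens to coincide with the erased conclusion. For example, in $\wedge^{\bullet}$ the premise may already contain $A^{\bullet},B^{\bullet}$ and $(A\wedge B)^{\bullet}$ with a repetition. Even so, the set-based inference is still a legitimate instance, because the paper explicitly allows the fixed context to retain the principal formula. For $\to^{\bullet}$ the first premise is $\mathcal C^{\downarrow}\{(A\to B)^{\bullet},A^{\circ}\}$, and its erasure is the corresponding set premise. The propagation rules are the same: $p_{\Box}$ keeps $(\Box A)^{\bullet}$ at $u$ and adds $A^{\bullet}$ at $v$, and $u\prec v$ is preserved since the tree is unchanged. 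In no case does erasure change the number of inferences, so the height does not increase. We could even shorten the derivation when a premise's erasure equals the conclusion, but this is not needed.

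For the converse we start from an $\NIKfour$ inference with set-based conclusion $\mathcal C\{\Theta\}$. We fix a multiset realization by choosing $\mathcal C$ as a set-labelled context and inserting $\Theta$ by multiset union. The result is a multiset instance of the same schema, with the same context and the same active nodes. By the commutation observation, erasing it returns the given set-based conclusion and premises. The main obstacle is bookkeeping rather than mathematics. We must state precisely that set union in filling a hole corresponds to a choice of fixed context, so that every case where the principal input formula is repeated in the context is covered by some set-based instance. I would make this the first lemma-level remark, and then treat the case analysis uniformly.
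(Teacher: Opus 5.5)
Your proposal is correct and follows essentially the same route as the paper: you use the local observation that erasure commutes with filling an input hole, $(M\uplus N)^{\flat}=M^{\flat}\cup N^{\flat}$, apply it to the last inference and lift it by induction on height, and for the converse you build a multiset realization that gives each contextual input one occurrence and adds the distinguished principal occurrences. The one detail the paper states and you leave implicit is that when two displayed subformulae coincide (e.g.\ $\wedge^{\bullet}$ with $A=B$), the inserted $\Theta$ must be read as a multiset with the corresponding number of distinguished copies; your phrase ``inserting $\Theta$ by multiset union'' already allows for this.
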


\begin{proof}
Both directions come from the same elementary observation.  If
\[
 C\Longrightarrow_r(P_1,\ldots,P_k)
\]
is a multiset inference, then
\[
 C^{\flat}\Longrightarrow_r(P_1^{\flat},\ldots,P_k^{\flat})
\]
is an inference of the corresponding rule with set inputs.  Indeed, for input
multisets $M,N$, erasing repetitions commutes with filling
an input hole.  Writing $M\uplus N$ for multiset union,
\[
 (M\uplus N)^{\flat}=M^{\flat}\cup N^{\flat}.
\]
The modal tree, the unique output, freshness, and the chosen propagation path
are unchanged.  Distinguished input occurrences may be followed before
repetitions are erased.  If a principal occurrence has an equal copy in the
side context, the latter remains after the principal occurrence is removed or
replaced; if no such contextual copy was chosen, it does not.  These are
exactly the two applications allowed by the convention preceding the rules.  Coincident displayed subformulae and branching rules require no new
argument: the required multiset copies are distinguished before erasure, and
the same erased context occurs in every branch.  The first premise of
$\to^{\bullet}$ changes only the output, while the modal rules change only the
fresh child or the propagation path.  Thus the displayed local fact holds for
every rule.

Induction on derivation height lifts this local observation to whole proofs.
Apply it to the last inference and the induction hypothesis to each immediate
premise; the height does not increase.

Conversely, fix an inference with set inputs, including its chosen side
contexts.  Give every contextual input one multiset occurrence and add the
distinguished principal occurrences required by the rule.  When a principal
formula also belongs to the chosen side context, these are two occurrences;
when two displayed subformulae coincide, use the corresponding number of
distinguished copies.  Keep the output, tree, fresh child, and propagation
path unchanged.  This is a multiset inference, and erasing its repeated
inputs recovers the given inference with set inputs.
\end{proof}

This passage does not assume contraction admissibility.  Repeated input
occurrences are removed by the proof transformation just established;
$\NIKfour$ treats node inputs as sets, while retaining the multiplicities of
sibling subtrees.

\begin{theorem}[Cut-free completeness]
\label{thm:nik4-complete}
A formula $A$ is valid on transitive Fischer--Servi frames iff
$A^{\circ}$ is derivable in $\NIKfour$.
\end{theorem}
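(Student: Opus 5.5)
The proof combines results already established, one for each direction.

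For the left-to-right direction, suppose $A$ is valid on transitive Fischer--Servi frames. By \cref{prop:multiset-complete}, $A^{\circ}$ has a cut-free proof in the multiset presentation of the rules in \cref{fig:nik4-logical,fig:nik4-prop}. Its height is some finite $h$. By \cref{lem:repetitions}, $(A^{\circ})^{\flat}$ then has a derivation in $\NIKfour$ of height at most $h$. The one-node sequent $A^{\circ}$ has no input occurrences, so erasing repetitions leaves it unchanged: $(A^{\circ})^{\flat}=A^{\circ}$. Hence $A^{\circ}$ is derivable in $\NIKfour$.

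For the right-to-left direction, suppose $A^{\circ}$ is derivable in $\NIKfour$. By \cref{thm:nik4-sound}, $A^{\circ}$ is valid as a nested sequent. It remains to read this as validity of the formula $A$. The sequent $A^{\circ}$ has a single node, no inputs and no edges. Hence every world $x$ of every model gives an edge-respecting assignment $I$ with $I(r)=x$, and this assignment vacuously makes all inputs true. Validity of $A^{\circ}$ therefore says exactly that $x\Vdash A$ for every world $x$ of every transitive Fischer--Servi model, which is validity of $A$.

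The only point that needs care is this matching of the two notions of validity, together with the check that the endsequent is untouched by $(\cdot)^{\flat}$. Both are immediate from the definitions. All the substantive work lies in the imported results: Lyon's Theorem~4, \cref{lem:labelled-to-nested}, and \cref{lem:repetitions}.
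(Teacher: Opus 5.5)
Your proposal is correct and matches the paper's proof: soundness comes from \cref{thm:nik4-sound}, and completeness from \cref{prop:multiset-complete} followed by \cref{lem:repetitions}, using $(A^{\circ})^{\flat}=A^{\circ}$. Your explicit check that validity of the one-node sequent $A^{\circ}$ coincides with validity of the formula $A$ spells out a step the paper leaves implicit.
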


\begin{proof}
Soundness is \cref{thm:nik4-sound}.  Conversely, validity gives a cut-free proof in the multiset presentation by
\cref{prop:multiset-complete}.  Since $A^{\circ}$ contains no input
occurrences, $(A^{\circ})^{\flat}=A^{\circ}$; hence
\cref{lem:repetitions} turns that proof into a $\NIKfour$ proof of the same
endsequent.
\end{proof}

\subsection{Subformulas}

Fix the formula $A_0$ to be decided and let $\Sub(A_0)$ denote its set of
subformulae.  Put
\[
 \Phi=\Sub(A_0).
\]
The set $\Phi$ is finite and closed under taking subformulae.

\begin{lemma}[Subformula property]
\label{lem:subformula-property}
Every formula occurring in a cut-free $\NIKfour$-proof of $A_0^{\circ}$
belongs to $\Phi$.
\end{lemma}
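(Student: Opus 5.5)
The plan is to argue by induction on the height of a cut-free $\NIKfour$-proof, read from the endsequent upwards. The claim to propagate is this: if every formula occurring in the conclusion of an inference, input or output, lies in $\Phi$, then every formula occurring in each premise also lies in $\Phi$. The endsequent $A_0^{\circ}$ contains only $A_0\in\Phi$. Applying the claim repeatedly along every branch of the proof then gives the statement, since every sequent in the proof is reached from the root by finitely many upward steps.

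The core step is a case inspection of the rules in \cref{fig:nik4-logical,fig:nik4-prop}. First, contexts are shared: the fixed part of $\mathcal C$ (or of $\mathcal C^{\downarrow}$) in each premise is contained in the fixed part of the conclusion, so its formulae are already in $\Phi$. Deleting the output in $\mathcal C^{\downarrow}$ only removes formulae. The set-union reading of input holes also needs no separate treatment, because it never introduces a formula that is not displayed.

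It therefore remains to check the displayed formulae of each premise, and each one is a subformula of, or equal to, a displayed formula of the conclusion.
\begin{itemize}[nosep]
\item For the conjunction and disjunction rules and for $\to^{\circ}$, the displayed premise formulae are $A$ and $B$, the immediate subformulae of the principal formula.
\item For $\to^{\bullet}$, the first premise retains $(A\to B)^{\bullet}$ itself and adds $A^{\circ}$; the second premise adds $B^{\bullet}$.
\item For $\Diamond^{\bullet}$ and $\Box^{\circ}$, the fresh child carries $A$, the immediate subformula of $\Diamond A$ or $\Box A$.
\item For $p_{\Diamond}$, the formula $A$ placed at $v$ is a subformula of $(\Diamond A)^{\circ}$ at $u$.
\item For $p_{\Box}$, the premise keeps $(\Box A)^{\bullet}$ at $u$ and adds $A^{\bullet}$ at $v$.
\item The rules $id$ and $\bot^{\bullet}$ have no premises.
\end{itemize}
Since $\Phi$ is closed under subformulae, every one of these formulae belongs to $\Phi$.

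I expect no deep obstacle here; the only care needed is bookkeeping. One point is to make sure the inspection covers the multiset-versus-set subtlety and the two readings of a principal input that is also present in the side context. In both readings, the premise formulae are drawn from the conclusion's formulae and their immediate subformulae, so the argument is unaffected. The other point is to note that the propagation rules move a formula between distinct nodes $u\prec v$ without creating anything new. The argument never uses the modal tree structure, only the formula contents at each node.
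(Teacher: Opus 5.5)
Your proof is correct and is essentially the paper's argument: read each rule upwards, observe that every formula newly displayed in a premise is a subformula of a principal formula of the conclusion (propagation only relocates such a subformula, and nullary rules add nothing), and use closure of $\Phi$ under subformulae to carry membership from $A_0^{\circ}$ to every sequent in the proof. Your rule-by-rule check, covering the shared context, $\mathcal C^{\downarrow}$ and the set/multiset readings, spells out what the paper states in a single sentence.
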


\begin{proof}
Reading a cut-free rule upwards, every formula newly displayed in a premise is
a subformula of a principal formula in the conclusion; propagation merely moves
such a subformula.  The nullary rules add nothing to a branch: they only close a
nested sequent already reached.  Hence every formula in the proof belongs to
$\Phi$.  In particular, $\bot$ can occur only when it is already a subformula
of $A_0$.
\end{proof}

The same induction gives the form used below: if a full endsequent uses
only formulae from a finite set $\Phi$ closed under subformulae, then every
formula in any cut-free proof of it lies in $\Phi$.  Henceforth, all nested
sequents are full and use only formulae from this fixed $\Phi$; in particular,
all rule instances, predecessor operations, enumerations, and saturation
stages are taken inside that universe.

Input formulae at a node form a set, while sibling subtrees remain a finite
unordered family with multiplicity.  Each rule mentions finitely many nodes and
creates at most one fresh child.  On a fixed finite tree it is decidable whether
the rule applies; propagation requires only the additional test that one node
is a strict descendant of another.

\section{The homeomorphic order}

Cut elimination bounds the formulae, not the modal depth.  The trees must
therefore be compared without imposing an artificial depth parameter.  For a
nested sequent $X$, write $V_X$ and $E_X$ for its node and edge sets, $r_X$ for
its root, $o_X$ for its output node, and
$\Gamma_X(v)\subseteq\Phi$ for the set of input formulae at a node $v$.  Since
$\Phi$ is finite, only finitely many such sets and output entries are possible;
a full nested sequent has exactly one node carrying an output formula.

Homeomorphic embedding is suggested by the calculus itself.  Input formulae
may be added, and transitivity makes it harmless to stretch a modal edge into
a non-empty path.  The branching pattern must nevertheless be left intact:
two representing paths may meet only when the original edges share an
endpoint, and no image node may lie inside the path representing another edge.

\subsection{Homeomorphic embedding of nested sequents}

\begin{definition}[Homeomorphic embedding]\label{def:embedding}
For nested sequents $S,T$, write $S\preceq T$ if there is an injective map
$f:V_S\to V_T$ satisfying the following conditions.
\begin{enumerate}[label=(\roman*),leftmargin=2.4em]
\item $f(r_S)=r_T$;
\item $\Gamma_S(v)\subseteq\Gamma_T(f(v))$ for every $v\in V_S$, and
      $f(o_S)=o_T$ with the same output formula at these two nodes;
\item for every edge $e=(v,w)$ of $S$, $f(w)$ is a proper descendant of
      $f(v)$ in $T$; write $\pi_e$ for the resulting non-empty directed path
      from $f(v)$ to $f(w)$;
\item for distinct source edges $e,e'$, let $V(\pi)$ denote the vertex set
      of a path $\pi$ and require
      \[
        V(\pi_e)\cap V(\pi_{e'})=f(e\cap e'),
      \]
      where an edge is identified with its two endpoints and $f$ is
      applied pointwise to endpoint sets.  Writing $\operatorname{Int}$ for
      the set of internal vertices, we also require
      \[
        \operatorname{Int}(\pi_e)\cap f(V_S)=\varnothing.
      \]
\end{enumerate}
\end{definition}

The relation $\preceq$ is a quasi-order.  The identity map witnesses
reflexivity.  For transitivity, let $f:S\preceq T$ and $g:T\preceq U$, and
put $h=g\circ f$.  Root preservation is immediate under
composition.  The input condition follows from transitivity of $\subseteq$,
and the two output conditions give $h(o_S)=o_U$ with the same output
formula.  If an edge $e$ of $S$ is represented in $T$ by
\[
 f(v)=t_0\to t_1\to\cdots\to t_m=f(w),
\]
replace each edge $(t_{j-1},t_j)$ by its $g$-path in $U$ and concatenate the
resulting paths.  Consecutive paths meet at $g(t_j)$, non-consecutive ones are
disjoint, and the concatenation is a non-empty directed path from $h(v)$ to
$h(w)$.  For two distinct edges of $S$, their $f$-paths meet only at the
image of a common endpoint; clause~(iv) for $g$ therefore gives the same
intersection property after expansion.  Finally, an $h$-image node cannot be
internal to a concatenated path: this would make an $f$-image node internal to
an $f$-path, or a $g$-image node internal to a $g$-path.  Thus $h$ satisfies
clauses~(i)--(iv).

The image nodes together with the representing paths form a subtree $H$ of
$T$ with vertex set
\[
 f(V_S)\cup\bigcup_{e\in E_S}V(\pi_e).
\]
This also covers the one-node tree.  Clause~(iv) implies that every vertex of
$H$ outside $f(V_S)$ has degree $2$ in $H$.  Suppressing
precisely those vertices recovers a copy of the underlying unlabelled tree of
$S$, with $f(v)$ corresponding to $v$.  Conversely, a homeomorphism of the
underlying tree of $S$ onto a subtree of $T$ determines paths satisfying
clauses~(iii)--(iv).
Since the root is mapped to the root, orientation away from it makes these
paths directed descendant paths.  Thus clauses~(iii)--(iv) are precisely the
rooted homeomorphic-embedding conditions on the underlying trees, while
clause~(ii) records how the input formulae attached to a node may increase.

The relation is decidable.  For finite $S,T$, inspect the finitely many
injective maps $f:V_S\to V_T$ with $f(r_S)=r_T$ and test clauses~(ii)--(iv).

\subsection{Rules and proofs are monotone}

The calculus is monotone for this order.  Adding input formulae or
subdividing modal edges does not destroy an inference, and the lift must retain
enough information to be iterated through a proof.

To lift an inference along an embedding, we must identify in each premise the
nodes inherited from the conclusion.  For any inference $I$ with
conclusion $C$ and premises $P_i$, write $\operatorname{Old}(P_i)$ for the nodes of $P_i$ inherited from
$C$.  Since no rule identifies or duplicates a
conclusion node within one premise, there is a bijection
\[
 \iota_i^I:V_C\longrightarrow\operatorname{Old}(P_i),
\]
sending each conclusion node to its inherited copy.  The only possible node
outside this inherited part is the fresh child created by
$\Diamond^{\bullet}$ or $\Box^{\circ}$.

\begin{lemma}[Rules respect embedding]\label{lem:instance-lifting}
Suppose
\[
 C\Longrightarrow_r(P_1,\ldots,P_k)
\]
is an inference $I$ obtained by applying a rule $r$, possibly with $k=0$, and
$f:C\preceq D$.  Then one can effectively construct an inference
\[
 J:\quad D\Longrightarrow_r(Q_1,\ldots,Q_k)
\]
obtained by applying $r$, together with, for every $i$, an embedding
$f_i:P_i\preceq Q_i$ compatible with these canonical correspondences.  More
explicitly, starting from a conclusion node $x$, it makes no difference
whether we first take its inherited copy in $P_i$ and then apply $f_i$, or
first send $x$ to $f(x)$ and then take the inherited copy of $f(x)$ in
$Q_i$:
\[
 f_i(\iota_i^I(x))=\iota_i^J(f(x))
 \qquad(x\in V_C).
\]
\end{lemma}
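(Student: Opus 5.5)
The plan is to transport the inference $I$ along $f$ node by node: the instance $J$ is applied at the $f$-images of the active nodes of $I$, and the extra material of $D$ is carried along as passive context. Concretely, an inference of $r$ is determined by its active nodes (one node $u$ for the local rules, a pair $u\prec v$ for $p_\Diamond,p_\Box$), its principal formula, and the chosen side context. For $J$ we take the active nodes $f(u)$ (and $f(v)$). The side context of $J$ at each node $y$ of $D$ is $\Gamma_D(y)$ minus whatever $I$ removes at the preimage of $y$, if that preimage exists. The output of $D$ sits at $f(o_C)$ and carries the same formula, by clause~(ii).

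The proof then checks applicability rule by rule. For the nullary rules, $p^\bullet$ or $\bot^\bullet$ lies in $\Gamma_C(u)\subseteq\Gamma_D(f(u))$, and $p^\circ$ lies at $f(o_C)=o_D$; so $D$ is again an instance. For the local one-node rules, the principal input, or the output, is present at $f(u)$ by clause~(ii), so $r$ applies there. The premises $Q_i$ are $D$ with the same local replacement: for $\to^\bullet$, the first premise deletes the output at $o_D$ and places $A^\circ$ at $f(u)$. For propagation, the side condition $f(u)\prec f(v)$ holds because $u\prec v$ is a nonempty path of edges of $C$, and each edge is sent by clause~(iii) to a nonempty descendant path. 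In $p_\Box$ the formula $(\Box A)^\bullet$ is retained at $f(u)$; in $p_\Diamond$ the output moves from $f(u)$ to $f(v)$.

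The set/side-context subtlety needs care. When $I$ chooses not to retain a principal input occurrence, $J$ must remove it at $f(u)$ only if it is absent from $\Gamma_D(f(u))$ apart from the image of $C$'s side context. Otherwise $J$ keeps it. In either case we still have $\Gamma_{P_i}(x)\subseteq\Gamma_{Q_i}(f(x))$, since $Q_i$ at most has more inputs.

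Next we define $f_i$. On old nodes we set $f_i(\iota_i^I(x))=\iota_i^J(f(x))$, which is exactly the required commutation. For $\Diamond^\bullet$ and $\Box^\circ$, the fresh child $c$ of $u$ is sent to the fresh child $c'$ of $f(u)$ in $Q_i$. Clauses (i)--(iii) are then inherited from $f$, together with the inclusions of node labels just noted. The new edge $(u,c)$ is represented by the single edge $(f(u),c')$. Clause~(iv) holds for the new path because $c'$ is a fresh leaf: it meets the other representing paths only at $f(u)$, and only when they share the endpoint $u$, which is exactly the allowed intersection $f(\{u,c\}\cap e')$. It is not internal to any path, and no image node is internal to it. Since the output moves in $\to^\bullet$, $\to^\circ$, $\Box^\circ$, and $p_\Diamond$, we must confirm that $f_i(o_{P_i})=o_{Q_i}$. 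This holds because both outputs are placed at corresponding nodes: $u\mapsto f(u)$, $v\mapsto f(v)$, and $c\mapsto c'$.

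Injectivity and effectiveness are immediate, since everything is a finite syntactic construction from $I$ and $f$.

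The main obstacle is the bookkeeping of set-valued inputs against the rules. Because the conclusion does not determine the inference, we must make sure that the chosen side context of $J$ never forces a premise of $J$ to lose an input formula that $P_i$ keeps at a corresponding node. I would handle this first by arguing at the level of the multiset realization, using \cref{lem:repetitions}: lift the multiset inference by adding the extra inputs of $D$ as additional context occurrences, then erase repetitions. This makes the inclusion $\Gamma_{P_i}(x)\subseteq\Gamma_{Q_i}(f(x))$ automatic.
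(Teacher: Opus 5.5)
Your proposal is correct and follows essentially the paper's route: fix a multiset realization of $I$, add the remaining inputs of $D$ as side-context occurrences, apply $r$ at the $f$-images and erase repetitions, then define $f_i$ by the commutation equation together with fresh child $\mapsto$ fresh child, and check the clauses rule by rule. One justification needs tightening. That the fresh edge meets an old path $\pi_{e'}$ only when $u\in e'$ does not follow merely from $c'$ being a fresh leaf. It follows because clause~(iv) for $f$ forbids the image node $f(u)$ from being internal to any representing path, and injectivity of $f$ handles the endpoint case.
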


\begin{proof}
Fix a multiset presentation of $I$.  At every conclusion node $x$ on which
the rule acts, its principal input formulae occur in $\Gamma_C(x)$ and hence in
$\Gamma_D(f(x))$.  Choose a multiset presentation of $D$ with one contextual occurrence for each
occurrence designated contextual in $I$.  Put every input formula of
$D$ not used by $I$ into the side context, and add the distinguished principal
occurrences required by the rule.  Thus a formula which is both
contextual and principal receives two occurrences.  Applying $r$ at the image
positions and then erasing repetitions gives an inference
\[
 J:\quad D\Longrightarrow_r(Q_1,\ldots,Q_k).
\]
For a modal introduction, the fresh child of $Q_i$ is attached to the image
of the principal node inherited from the conclusion.

Define $f_i$ on inherited nodes by
\[
 f_i(\iota_i^I(x))=\iota_i^J(f(x)) \qquad(x\in V_C),
\]
and, when $P_i$ has a fresh child, send it to the fresh child of $Q_i$.
The map is injective and sends root to root.  Let $A$ be an input formula at
an inherited node $\iota_i^I(x)$ of $P_i$.  Relative to the fixed multiset
inference, either $A$ is inherited from the side context of $C$ or it is
supplied by the rule.  In the first case $A\in\Gamma_C(x)$, hence
$A\in\Gamma_D(f(x))$, and the chosen side context retains it in $Q_i$.  In
the second case the lifted inference supplies $A$ at
$\iota_i^J(f(x))$.  The same alternative covers a fresh child, where every
displayed formula is supplied by the modal rule.  Therefore
\[
 \Gamma_{P_i}(v)\subseteq\Gamma_{Q_i}(f_i(v))
 \qquad(v\in V_{P_i}).
\]

For the output there are three possibilities.  It is inherited unchanged by
all input rules, by $p_{\Box}$, and by the second premise of
$\to^{\bullet}$; it is then preserved by $f$.  An output rule, the first
premise of $\to^{\bullet}$, or $p_{\Diamond}$ places the corresponding
output formula at corresponding inherited nodes.  Finally, $\Box^{\circ}$ places
it at the two corresponding fresh children.  Hence $f_i$ maps the unique
output node of $P_i$ to that of $Q_i$ and preserves its formula.

Every edge of $P_i$ between inherited nodes comes from an edge of $C$; use
for it the path by which $f$ represents that edge in $D$.  These paths retain
clause~(iv) of \cref{def:embedding}.  If $P_i$ has a fresh child and $x$ is the principal conclusion node, the
edge
from $\iota_i^I(x)$ to that child is represented by the corresponding fresh
edge in $Q_i$.  Its parent is $\iota_i^J(f(x))$, the inherited copy of an $f$-image node; clause~(iv) for $f$ says that $f(x)$ is not internal to any
path representing an inherited edge.  The fresh edge therefore meets every path representing an inherited edge only at its parent, if at all.  For a propagation rule, an active relation $u\prec v$ in $C$
is sent by $f$ to a non-empty path from $f(u)$ to $f(v)$, so the same side
condition holds in $D$.

When $k=0$, there are no premises to treat.  If $r=id$, clause~(ii) for
$f$ preserves the matching input atom at the output node; if
$r=\bot^{\bullet}$, it preserves $\bot$ at the active node.  Thus the same
nullary rule applies to $D$.  For $k>0$, the
preceding paragraphs verify all clauses of \cref{def:embedding} for each
$f_i$.  Every step uses only the finite data carried by $I$ and $f$, so the
inference $J$ and the embeddings $f_i$ are obtained effectively.
\end{proof}

\begin{lemma}[Weakening along embeddings]
\label{lem:weakening-embedding}
Given an embedding $f:S\preceq T$ and a proof of $S$ in $\NIKfour$, one can
effectively construct a proof of $T$ of no greater height.
\end{lemma}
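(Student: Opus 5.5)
The plan is an induction on the height of the given proof $\mathcal D$ of $S$, carried out simultaneously for all targets $T$ and all embeddings $f:S\preceq T$. The statement to prove by induction is: for every $h$, every $S$ with a proof of height at most $h$, and every $f:S\preceq T$, one can effectively build a proof of $T$ of height at most $h$. Quantifying over all $T$ and $f$ at each height is what lets the hypothesis be applied to the premises, whose targets are new sequents produced by the lifting step.

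Let the last inference of $\mathcal D$ be $I:\ S\Longrightarrow_r(P_1,\ldots,P_k)$, with the subproof of $P_i$ of height at most $h-1$. I would apply \cref{lem:instance-lifting} to $I$ and $f$. This yields an inference $J:\ T\Longrightarrow_r(Q_1,\ldots,Q_k)$ of the same rule, together with embeddings $f_i:P_i\preceq Q_i$.

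If $k=0$, then $J$ is itself a nullary inference with conclusion $T$. This already gives a proof of height $1$, or whatever height convention is used for axioms, which is no larger than that of $\mathcal D$. If $k>0$, the induction hypothesis applied to each subproof of $P_i$ and to $f_i$ gives proofs of the $Q_i$ of height at most $h-1$. Placing $J$ beneath them gives a proof of $T$ of height at most $h$.

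Effectivity follows because \cref{lem:instance-lifting} is effective and the recursion follows the finite proof tree $\mathcal D$. Each step of the recursion only consumes data that earlier steps have already computed, namely the inference $I$ and the embedding $f$ or $f_i$.

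The only point needing care is that the compatibility clause of \cref{lem:instance-lifting} is not needed for this statement itself. That clause matters only if one wants to track nodes through the whole lifted proof. Here it suffices that each $f_i$ is a genuine embedding in the sense of \cref{def:embedding}, so that the induction hypothesis applies to it.

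The main obstacle, such as it is, is already absorbed into \cref{lem:instance-lifting}. In particular, that lemma handles the first premise of $\to^{\bullet}$, where the output moves, and the propagation side condition $u\prec v$, which survives edge subdivision. So I expect no new difficulty here beyond setting up the induction with the universal quantification over targets and embeddings described above.
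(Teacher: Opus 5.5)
Your proposal is correct and is essentially the paper's own proof. Both argue by induction on the height of the proof of $S$, lift the last inference along $f$ with \cref{lem:instance-lifting} (the nullary case included), apply the induction hypothesis to each $f_i:P_i\preceq Q_i$, and reapply $r$. Your remarks that the compatibility clause is not needed here and that the paper assigns initial sequents height $0$ are consistent with the paper's argument.
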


\begin{proof}
Induct on the height of the given proof of $S$.  If its last inference is the
nullary inference $S\Longrightarrow_r()$, apply
\cref{lem:instance-lifting} to that inference and to $f$; the result is the
corresponding nullary inference with conclusion $T$.

Otherwise write the last inference as
\[
 S\Longrightarrow_r(P_1,\ldots,P_k).
\]
By \cref{lem:instance-lifting}, the embedding $f$ effectively yields an
inference $T\Longrightarrow_r(Q_1,\ldots,Q_k)$ and embeddings
$f_i:P_i\preceq Q_i$.  Applying the induction hypothesis to each $f_i$ and
the corresponding premise proof gives proofs of all $Q_i$ of no greater
height.  Reapplying $r$ proves $T$.
\end{proof}

Provability is therefore upward closed under $\preceq$.  A labelled form
of Kruskal's theorem supplies a stronger tree embedding and hence shows that
$\preceq$ is a well-quasi-order.

\subsection{Kruskal and finite bases}

\begin{lemma}[A decidable well-quasi-order]
\label{lem:wqo}
The relation $\preceq$ is a decidable well-quasi-order (wqo) on nested sequents.
\end{lemma}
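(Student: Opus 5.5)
Decidability has already been observed after \cref{def:embedding}: for finite $S,T$ one inspects the finitely many injective root-preserving maps $V_S\to V_T$ and checks clauses~(ii)--(iv). So the plan concerns only the wqo property. It has been checked that $\preceq$ is a quasi-order. Hence it remains to show that every infinite sequence $S_1,S_2,\ldots$ of nested sequents over $\Phi$ contains indices $i<j$ with $S_i\preceq S_j$. The route is to encode each nested sequent as a finite rooted tree labelled from a finite set and to apply Kruskal's labelled tree theorem.

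The encoding is as follows. Let $L$ be the finite set of pairs $(\Gamma,\omega)$, where $\Gamma\subseteq\Phi$ and $\omega$ is either a formula of $\Phi$ (the output) or a symbol $\ast$ (no output). Order $L$ by $(\Gamma,\omega)\leq(\Gamma',\omega')$ iff $\Gamma\subseteq\Gamma'$ and $\omega=\omega'$. This order is a wqo simply because $L$ is finite. Label each node $v$ of $S$ by $(\Gamma_S(v),\omega_S(v))$, with $\omega_S(v)=\ast$ except at $o_S$. Kruskal's theorem in its labelled, rooted form (as in Nash--Williams's formulation~\cite{NashWilliams1963}) gives, for any infinite sequence, indices $i<j$ and an injective map $f:V_{S_i}\to V_{S_j}$ with three properties: $f$ preserves infima (nearest common ancestors) in the rooted tree orders, it satisfies $\ell(v)\leq\ell(f(v))$, and it sends $v<w$ to $f(v)<f(w)$. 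Note that the usual inf-preserving embedding need not send root to root. To obtain root preservation, I would add a fresh root above every tree, carrying a distinguished label $\top$ comparable only with itself. Then the fresh root must map to the fresh root, and the original root lands on a node below it. That is not enough yet. Instead I would label the original root by a special flag $\rho$ in a separate label component, compared by equality, so that roots map to roots.

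The main obstacle is converting Kruskal's inf-preserving embedding into clauses~(iii)--(iv). Clause~(iii) follows from order preservation. For each edge $e=(v,w)$, the path $\pi_e$ from $f(v)$ to $f(w)$ is unique in a tree. For clause~(iv), take distinct edges $e=(v,w)$ and $e'=(v',w')$. If they share the parent $v=v'$, then $w\wedge w'=v$, so $f(w)\wedge f(w')=f(v)$. Hence the two paths leave $f(v)$ through different children, and they meet only at $f(v)$. If $w=v'$, the paths are consecutive and meet exactly at $f(w)$. If the edges are disjoint, the nearest-common-ancestor property separates them similarly. Next, no image $f(x)$ can be internal to $\pi_e$. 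If it were, then $f(v)<f(x)<f(w)$. If $x$ is not comparable to $w$, then $x\wedge w$ is at most $v$, while $f(x)\wedge f(w)=f(x)$ is strictly above $f(v)$. Inf-preservation then gives $f(x\wedge w)=f(x)$, so $x\leq w$ by injectivity, and thus $v<x<w$, which contradicts $(v,w)$ being an edge. The remaining cases are symmetric. Clause~(ii) follows from the label order: inputs are included, and the output node with its formula is matched exactly, because an output formula is comparable only with itself and $\ast$ only with $\ast$. Since each sequent has exactly one output node, $f(o_{S_i})=o_{S_j}$.

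Thus every infinite sequence is good, and $\preceq$ is a wqo. Together with the decidability already noted, this proves the lemma.
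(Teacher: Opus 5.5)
Your proof is correct and follows the paper's route. You use the same finite label set (a root marker and the output entry compared by equality, input sets compared by inclusion) and apply Kruskal's labelled tree theorem. You then check that the resulting inf-preserving monomorphism satisfies clauses (i)--(iv), and take decidability from the finite search over root-preserving injections. Your case analysis for clause (iv) actually spells out what the paper only asserts, and the abandoned fresh-root detour can simply be deleted.
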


\begin{proof}
We mark the root in the label, rather than pass to pointed trees, so that the
appeal to the labelled Tree Theorem is literal.
Let
\[
 Q_\Phi=\{\mathsf{root},\mathsf{other}\}
          \times\mathcal P(\Phi)
          \times(\Phi\cup\{\varnothing\}).
\]
The first coordinate is the root marker; the third is empty at every node
except the unique output node.  Declare a triple $(\rho,\Gamma,o)$ to be
below $(\rho',\Gamma',o')$ exactly when
\[
 \rho=\rho',\qquad \Gamma\subseteq\Gamma',\qquad o=o'.
\]
This is a finite quasi-order and hence a wqo.

Kruskal states the Tree Theorem for finite rooted structured trees whose
vertices are labelled in a wqo \cite[\S2, pp.~211--212]{Kruskal1960}.  His
structured trees carry a linear order on the edges leaving each vertex.  To
apply the theorem here, choose an arbitrary such order at every node of every
tree in an infinite sequence of nested sequents.  Kruskal then gives a pair
related by an order-preserving labelled monomorphism.  For these trees, each source edge is represented by an oriented path whose
interior contains no image node; paths representing distinct source edges meet
only in the image of a common endpoint and are otherwise vertex-disjoint.  After the auxiliary orders are
forgotten, the same map is therefore an embedding in the sense of
\cref{def:embedding}.  Thus the order used by Kruskal is stronger than
$\preceq$, which is all the wqo argument needs.
Nash--Williams's formulation makes explicit that the underlying unlabelled
relation is homeomorphism onto a subtree \cite[p.~833]{NashWilliams1963}.

The three coordinates of $Q_\Phi$ impose the remaining clauses.  Equality
of the root marker sends the sole node carrying $\mathsf{root}$ in the source
to the sole such node in the target.  Inclusion of the second coordinate is
the required inclusion of input formulae, and equality of the third preserves
both the output node and its formula.  Hence every infinite sequence of
nested sequents has a
$\preceq$-comparable pair.  The relation is therefore a wqo, and its
decidability was established above.
\end{proof}

Since $\preceq$ is a quasi-order rather than a partial order, minimality must
be understood modulo mutual embedding.  Write
\[
 S\equiv T \quad\Longleftrightarrow\quad
 S\preceq T\text{ and }T\preceq S,
 \qquad
 S\lhd T \quad\Longleftrightarrow\quad
 S\preceq T\text{ and }T\not\preceq S.
\]
Minimality below is always minimality modulo $\equiv$: an element $T$ of a set
$X$ is minimal when no $S\in X$ satisfies $S\lhd T$.  For a set $X$ of
nested sequents, put
\[
 \Up X=\{T:\exists S\in X\ (S\preceq T)\}.
\]
A set $B$ is a basis of an upward-closed set $U$ when $U=\Up B$.

\begin{proposition}[Finite bases and the ascending-chain condition]
\label{prop:wqo-consequences}
Every upward-closed set of nested sequents has a finite basis.  Moreover, every
increasing sequence of upward-closed sets of nested sequents stabilises.
\end{proposition}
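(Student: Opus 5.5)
Both claims are standard consequences of the fact, established in \cref{lem:wqo}, that $\preceq$ is a well-quasi-order. I will derive them from the single property that every infinite sequence $S_0,S_1,\ldots$ contains indices $i<j$ with $S_i\preceq S_j$. Throughout, minimality is taken modulo $\equiv$, as fixed above.

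\emph{Finite basis.} Let $U$ be upward closed and let $M\subseteq U$ be its set of minimal elements. First I show that every $T\in U$ lies above some element of $M$. If not, $T$ is not minimal, so some $T_1\in U$ satisfies $T_1\lhd T$. Then $T_1$ cannot lie above an element of $M$ either, since $\preceq$ is transitive. Iterating produces an infinite strictly descending sequence $T\rhd T_1\rhd T_2\rhd\cdots$. Whenever $i<j$ we have $T_j\lhd T_i$, and hence $T_i\not\preceq T_j$. This contradicts the wqo property. Consequently $U=\Up M$.

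Next, pick one representative from each $\equiv$-class meeting $M$, and call the resulting set $B$. Any two distinct elements of $B$ are incomparable. Indeed, if $S\preceq S'$ with both minimal, then minimality of $S'$ forces $S'\preceq S$, so $S\equiv S'$, and the two would be the same representative. So $B$ is an antichain. If $B$ were infinite, enumerating it would give an infinite sequence with no increasing pair, again contradicting the wqo property. Hence $B$ is finite, and $\Up B=\Up M=U$ because each element of $M$ is $\equiv$ to a member of $B$.

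\emph{Ascending chains.} Let $U_0\subseteq U_1\subseteq\cdots$ be upward closed, and suppose the sequence never stabilises. Passing to a subsequence, I may assume every inclusion is strict, and choose $S_n\in U_{n+1}\setminus U_n$. For $i<j$, if $S_i\preceq S_j$, then $S_j\in U_{i+1}$ by upward closure. Since $U_{i+1}\subseteq U_j$, this gives $S_j\in U_j$, contradicting the choice of $S_j$. So $(S_n)$ has no increasing pair, contradicting \cref{lem:wqo}.

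An alternative for the second claim is to apply the first part to $\bigcup_n U_n$: its finite basis already lies in some single $U_N$.

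There is no real obstacle in either argument. The only delicate point is stating minimality modulo $\equiv$, so that the antichain argument is sound in a quasi-order rather than a partial order.
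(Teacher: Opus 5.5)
Your proposal is correct and follows essentially the same route as the paper. Both use the same three steps: an infinite $\lhd$-descending chain with no minimal element beneath it would be a bad sequence; the minimal $\equiv$-classes form an antichain and so are finitely many; and for the chain condition one picks elements from successive strict differences and gets a contradiction from an increasing pair via upward closure, where your claim $T_j\lhd T_i$ for $i<j$ holds because $T_i\preceq T_j$ would give $T_i\preceq T_{i+1}$.
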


\begin{proof}
The empty set has the empty basis.  Suppose $U\neq\varnothing$ and fix
$T_0\in U$.  If no $\lhd$-minimal member of $U$ lay below $T_0$, choose
$T_{n+1}\in U$ with $T_{n+1}\lhd T_n$ recursively.  This choice can be
continued: a minimal member below some $T_n$ would also lie below $T_0$.  For
$i<j$, transitivity gives $T_j\preceq T_i$, whereas $T_i\preceq T_j$ would
imply $T_i\preceq T_{i+1}$, contrary to $T_{i+1}\lhd T_i$.  The sequence
would therefore be bad.  Hence every member of $U$ lies above a minimal one.
Distinct minimal $\equiv$-classes are incomparable, and infinitely many such
classes would give an infinite bad sequence by choosing one representative
from each.  There are consequently only finitely many minimal classes.  One
representative from each forms a finite basis: its upward closure is contained
in $U$, while every member of $U$ lies above one of the representatives.

For the second assertion, suppose that an increasing chain
$(U_n)_{n\in\mathbb N}$ does not stabilise.  Choose recursively
\[
 n_0<n_1<n_2<\cdots
 \qquad\text{with}\qquad
 U_{n_i}\subsetneq U_{n_{i+1}},
\]
and choose $T_i\in U_{n_{i+1}}\setminus U_{n_i}$.  By
\cref{lem:wqo}, there are $i<j$ with $T_i\preceq T_j$.  Since
$n_{i+1}\leq n_j$, we have $T_i\in U_{n_{i+1}}\subseteq U_{n_j}$; upward
closure of $U_{n_j}$ then gives $T_j\in U_{n_j}$, contrary to the choice of
$T_j$.  Thus the chain stabilises.
\end{proof}

Kruskal yields finite bases and eventual stabilisation, but no algorithm for
computing predecessors or detecting the first stable stage.  Those two points
still have to be proved.

\section{Finite support and backward closure}

A backward inference may occur in an arbitrarily large context, although only
a few positions are used by the rule and by the images of the chosen basis
elements.  The finite-support argument removes everything else and reduces the
computation of predecessor bases to a finite search.

\subsection{Finite support}

Recall that $C\Longrightarrow_r(P_1,\ldots,P_k)$ denotes an inference
obtained by applying $r$, with conclusion $C$ and the displayed premises.
Nullary inferences are initial
sequents.  For an upward-closed set $U$, define
\[
 \Pre_r(U)=
 \{C:\exists P_1,\ldots,P_k\in U\quad
       C\Longrightarrow_r(P_1,\ldots,P_k)\},
 \qquad
 \Pre(U)=\bigcup_r\Pre_r(U),
\]
where the union ranges over the finitely many rule schemas.  For $k=0$ the
premise condition is empty.  For a finite nested sequent $X$, write $|X|:=|V_X|$.

By \cref{lem:instance-lifting}, $\Pre_r(U)$ is upward closed whenever
$U$ is.  Indeed, lift an inference with conclusion $C$ along
$C\preceq D$; each premise embeds into the corresponding lifted premise,
which therefore belongs to $U$.  Nullary rules require no separate argument.

Most of the surrounding context is irrelevant to a fixed backward inference,
but the nodes that determine it must remain.  We call these positions
\emph{active}.
For $id$, $\bot^{\bullet}$, and every propositional rule
except $\to^{\bullet}$, the sole active position is the node carrying the
displayed principal formulae.  For $\to^{\bullet}$ the principal node and the
conclusion-output node are active, even when they coincide.  A modal introduction acts at one conclusion node, its principal node; the
fresh child belongs only to the premise and will be regenerated.  A
propagation rule acts at two conclusion nodes.  Let $a_r$ be the maximum
number of distinct conclusion nodes at which an $r$-instance acts.  Thus $a_r=1$ for the
nullary rules, the local propositional rules, and the modal introductions,
while $a_r=2$ for $\to^{\bullet}$ and the propagation rules.

For each inference, fix a multiset presentation whenever a displayed formula
already occurs in the side context.  It records which occurrence the inference uses.  In a
premise, formulae at nodes inherited from the conclusion are either already present
there or
supplied by the rule; only the inherited ones must be retained when the
conclusion context is restricted.

\begin{lemma}[Restricting an inference]\label{lem:rule-restriction}
Consider an inference $I$ of the form
\[
 C\Longrightarrow_r(P_1,\ldots,P_k).
\]
Let $M\subseteq V_C$ contain the root, the conclusion output node, and every conclusion node at which $I$ acts.  At nodes of $M$, choose some input formulae from the conclusion context to
retain, including the input formulae which make a nullary conclusion initial
and
all principal input formulae required in the conclusion.  Form $D$ as follows:
\begin{enumerate}[label=(\roman*),leftmargin=2.4em]
\item take the union of the root-to-$v$ paths in $C$ for $v\in M$;
\item remove all branches outside this rooted hull and all unchosen input
      formulae, but retain the conclusion output formula;
\item simultaneously suppress every maximal chain of unmarked vertices whose
      undirected degree in that rooted hull is $2$ (equivalently, iterate the
      suppression until no such vertex remains).
\end{enumerate}
Then $D\preceq C$, and the same rule can be applied to give an inference
\[
 D\Longrightarrow_r(Q_1,\ldots,Q_k).
\]
Every chosen premise formula coming from the conclusion context remains at
its corresponding node; the rule supplies its other displayed formulae again
and regenerates any fresh modal child.
\end{lemma}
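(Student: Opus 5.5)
The proof splits into three steps.

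\emph{Step 1: the embedding $D\preceq C$.} Let $H$ be the rooted hull from step~(i). It is a subtree of $C$ containing the root, and every leaf of $H$ lies in $M$. After suppression, each vertex $d$ of $D$ corresponds to a vertex $g(d)$ of $H$, and each edge of $D$ corresponds to a maximal chain in $H$ through suppressed vertices. The map $g$ is injective and sends root to root. It also sends the output node to $o_C$, since $o_C\in M$ is marked and never suppressed. Inputs of $D$ are, by construction, chosen subsets of the inputs of $C$, which gives clause~(ii). For clauses~(iii)--(iv), each edge of $D$ is represented by its suppressed chain, a non-empty descending path in $C$. Distinct chains meet only at a common surviving endpoint, and no surviving vertex is internal to a chain, because suppressed vertices are exactly the non-image vertices. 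This is the converse direction of the homeomorphism remark after \cref{def:embedding}.

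\emph{Step 2: the restricted inference.} Every active node lies in $M$, so it survives with its retained principal and context formulae. I then check the rules one by one.

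For local rules and modal introductions, I apply $r$ at $g^{-1}$ of the active node. The multiset presentation fixed for $I$ tells us which occurrences are contextual, so a contextual copy is kept exactly when it was chosen. A modal introduction attaches a new fresh child; nothing from the old fresh child needs to be carried over.

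For $\to^{\bullet}$, both the principal node and the output node are in $M$. The first premise moves the output to the principal node, the second keeps it, and both operations are defined in $D$.

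For propagation, the side condition $u\prec v$ is the case I expect to need the most care. Both $u$ and $v$ lie in $M$, hence in $H$, and the root-to-$v$ path passes through $u$. Suppression only contracts chains, so ancestry between surviving vertices is preserved and the relation remains strict in $D$. The premise then adds $A^{\circ}$ or $A^{\bullet}$ at $g^{-1}(v)$.

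For nullary rules, the retained formulae are $p^{\bullet},p^{\circ}$ at the output node, or $\bot^{\bullet}$, so the restricted conclusion is again initial.

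\emph{Step 3: the premise clause.} The final assertion comes from the bookkeeping of Step~2. Any premise formula coming from the conclusion context that was chosen lies at a node of $M$, so it sits in $D$ at the corresponding node $g^{-1}(x)$ and is inherited by each $Q_i$. All other displayed formulae are supplied by $r$ itself. I would write this as a short case distinction mirroring the proof of \cref{lem:instance-lifting}, with $g$ playing the role of $f$ in the reverse direction.

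The main obstacle is ensuring that simultaneous suppression, the equivalence with iterated suppression, and the strict-descendant condition all interact correctly. In particular, a vertex must not become suppressible after its neighbours are contracted. This does not happen, because contraction preserves the degrees of the surviving vertices. So I would first prove that suppression preserves degrees and ancestry, and then read off both steps from that.
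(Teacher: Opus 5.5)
Your proposal is correct and follows essentially the same route as the paper. You identify the surviving vertices with their originals in the rooted hull and represent each edge of $D$ by the hull segment it replaces, which gives $D\preceq C$. You then rebuild the inference rule by rule from the fixed multiset presentation, and strict descendancy for $p_{\Diamond}$ and $p_{\Box}$ survives because the root-to-$v$ path contains the whole $u$--$v$ segment and suppression cannot empty it. Your remark that suppressing a degree-$2$ vertex leaves the degrees of all other vertices unchanged is a useful explicit justification of the equivalence between simultaneous and iterated suppression, which the paper only asserts parenthetically.
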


\begin{proof}
Let $K$ be the rooted hull in clause~(i), and identify every surviving vertex
of $D$ with the corresponding vertex of $C$.  Each edge of $D$ is represented
by the unique non-empty segment of $K$ between its endpoints.  Distinct such
segments have disjoint interiors and no surviving vertex lies in an interior.
The inclusion of surviving vertices sends the marked root of $D$ to
$r_C$ and therefore witnesses $D\preceq C$: only input formulae have been
deleted, while the marked output node and its formula are unchanged.

To reconstruct the inference, restrict the fixed multiset side context of $I$
to the retained occurrences, keep the same distinguished principal occurrences
at the surviving active nodes, and erase repetitions after the rule has been
applied.  The verification is as follows.
For $id$ and $\bot^{\bullet}$, the active node and the formulae which make the
conclusion initial were retained.  A propositional rule other than
$\to^{\bullet}$ refers only to its principal node and displayed principal
formulae, all of which survive.  For $\to^{\bullet}$, both the principal node
and the node carrying the conclusion output survive; the first premise removes
that output and puts $A^{\circ}$ at the principal node, whereas the second
retains it and puts $B^{\bullet}$ at the principal node.  A modal introduction
uses its surviving principal node and regenerates its fresh child.

The only rules whose side condition relates two inherited nodes are
$p_{\Diamond}$ and $p_{\Box}$.  Their endpoints belong to $M$.  If they are
$u\prec v$ in $C$, the root-to-$v$ path in $K$ contains $u$ and the whole
segment from $u$ to $v$.  Since $u\neq v$ and both survive, suppressing
unmarked degree-$2$ vertices leaves a non-empty path from $u$ to $v$ in $D$.
Thus strict descendancy is preserved.  Every chosen inherited input remains
at the corresponding inherited node, and every other premise formula is
supplied again by the
rule.  This yields the required inference with conclusion $D$.
\end{proof}

At most $a_r$ conclusion nodes are used by the rule.  The root and output may
have to be kept independently, and each node of a basis element contributes
at most one origin in the conclusion.  Hence the marked set will have size at
most
\[
 a_r+2+\sum_{i=1}^{k}|B_i|.
\]
Its rooted hull may also contain unmarked branch points.  The following
small tree estimate accounts for them.

\begin{lemma}[Size of the reduced hull]
\label{lem:reduced-hull}
Let $K$ be the union of the root-to-$v$ paths in a finite rooted tree, where
$v$ ranges over a non-empty set $M$ of marked vertices containing the root.
Suppress every maximal chain consisting of unmarked vertices of undirected
degree $2$ in $K$.  The resulting tree has at most $2|M|-1$ vertices.
\end{lemma}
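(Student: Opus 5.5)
We first describe the structure of $K$, then count its vertices after suppression. Every vertex of $K$ is an ancestor of a marked vertex; in particular every leaf of $K$ lies in $M$. A leaf is a vertex with no children in $K$, and the root is marked even when it is a leaf. Suppression leaves in place every marked vertex, and every unmarked vertex whose undirected degree in $K$ differs from $2$. Unmarked vertices are never leaves, and none of them is the root. An unmarked non-root vertex of $K$ has a parent, so its degree is $1+c$, where $c\geq1$ is its number of children. Degree different from $2$ therefore means $c\geq2$: the survivors outside $M$ are exactly the unmarked branch points.

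Next we observe that suppressing a maximal chain of degree-$2$ vertices contracts a path into one edge. The result $D$ is again a rooted tree, with the same root. Each surviving vertex keeps its number of children, because a child path through suppressed vertices becomes a single edge.

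We then count in $D$. Let $L$ be the set of leaves of $D$; each lies in $M$. Let $B$ be the set of unmarked vertices of $D$; each has at least two children. We use a standard count for a finite rooted tree with $n$ vertices. Since there are $n-1$ edges,
\[
 n-1=\sum_{x}c(x)\ \geq\ 2|B|+\sum_{x\in M\setminus L}c(x)\ \geq\ 2|B|+|M\setminus L|,
\]
where the last step uses that a non-leaf marked vertex has at least one child. Substituting $n=|B|+|M|$ gives
\[
 |B|+|M|-1\ \geq\ 2|B|+|M|-|L|,
\]
so $|B|\leq|L|-1\leq|M|-1$ and $n\leq 2|M|-1$. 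In the degenerate case $M=\{r\}$, $K$ is the single root and the bound is $1$.

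Alternatively, one can run the induction on $|L|$: add the root-to-leaf paths one at a time, each creating at most one new branch point. The counting route seems cleaner.

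The only delicate point is the first step: we must check that no unmarked vertex of degree $1$ or of degree $0$ survives. This holds because the root is marked and every leaf of a hull generated by $M$ lies in $M$. Everything after that is the edge count.
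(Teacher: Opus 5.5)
Your proof is correct and follows the paper's argument: you show that the surviving unmarked vertices are branch points, then bound how many there are by counting the edges of the suppressed tree. The only difference is that you sum out-degrees (children) where the paper sums undirected degrees, which gives $|B|\leq|L|-1$ uniformly and makes the paper's separate case $|M|=1$ unnecessary.
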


\begin{proof}
Every leaf of $K$ is marked, since it is the endpoint of one of the paths used
to form the hull.  After suppression, every surviving unmarked vertex has
undirected degree at least $3$.  Put $m=|M|$ and let $b$ be the number of
surviving unmarked vertices.  If $m=1$, the hull is just its marked root and
$b=0$.  If $m\geq2$, every marked vertex has degree at least $1$, and the
degree sum gives
\[
 2(m+b-1)\geq m+3b.
\]
Thus $b\leq m-2$, and in either case the total number of vertices is at most
$2m-1$.
\end{proof}

\Needspace{13\baselineskip}
\begin{lemma}[Finite support]
\label{lem:finite-support}
Suppose an application of $r$ has conclusion $C$ and premises
$P_1,\ldots,P_k$, and let $B_1,\ldots,B_k$ be nested sequents equipped with
fixed embeddings $e_i:B_i\preceq P_i$.  For $k=0$ the sum below is empty.  Then
there is another application of $r$ with conclusion $C'\preceq C$ and
premises $P'_i$ such that
$B_i\preceq P'_i$ for every $i$ and
\[
 |C'|\leq 2\left(a_r+2+\sum_{i=1}^{k}|B_i|\right)-1.
\]
\end{lemma}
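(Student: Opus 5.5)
The plan is to apply \cref{lem:rule-restriction} with a carefully chosen marked set $M$ and retained input formulae, and then bound the size of the result with \cref{lem:reduced-hull}.

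\textbf{Choosing $M$.} Put into $M$ the root $r_C$, the conclusion output node $o_C$, and the at most $a_r$ conclusion nodes at which the inference acts. For each $i$ and each node $b$ of $B_i$, the embedding $e_i$ sends $b$ to a node $e_i(b)$ of $P_i$. If $e_i(b)$ is inherited, say $e_i(b)=\iota_i(x)$, mark its origin $x\in V_C$. If $e_i(b)$ is the fresh child, mark nothing new: that child hangs below the principal node, which is already active and hence marked. So each basis node contributes at most one origin, and
\[
 |M|\le a_r+2+\sum_{i=1}^k|B_i|.
\]

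\textbf{Retained inputs.} At each marked node $x$, retain the principal input formulae required by the rule, the formulae making a nullary conclusion initial, and every formula of $\Gamma_{B_i}(b)$ whenever $e_i(b)=\iota_i(x)$ and that formula is inherited from the conclusion side context. Any formula of $\Gamma_{B_i}(b)$ that is not inherited was supplied by the rule, and the restricted inference supplies it again.

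\textbf{Applying the earlier lemmas.} \cref{lem:rule-restriction} now gives $C'\preceq C$ together with an inference $C'\Longrightarrow_r(P'_1,\ldots,P'_k)$. \cref{lem:reduced-hull} bounds $|C'|\le 2|M|-1$, which is the displayed estimate.

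\textbf{Main obstacle: $B_i\preceq P'_i$.} The candidate map sends $b$ to the inherited copy in $P'_i$ of its marked origin, or to the regenerated fresh child. I will check the clauses of \cref{def:embedding} in turn.
\begin{enumerate}[label=(\roman*),leftmargin=2.4em]
\item The root is preserved, because $r_C$ was marked.
\item Input inclusion holds by the choice of retained formulae. The output node and its formula are preserved because the restriction keeps the output and the rule places the premise output at corresponding positions.
\item Each edge of $B_i$ is represented in $P_i$ by a descendant path $\pi$. The vertices of $\pi$ other than a possible fresh endpoint are inherited, so they lie in the rooted hull $K$: they sit on the root-to-endpoint path, and that endpoint's origin is marked. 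Suppression replaces $\pi$ by the path through the surviving vertices of $\pi$, which is still non-empty and descending. A fresh edge from the principal node is reproduced directly.
\item Disjointness transfers because suppression acts as a contraction of $K$ and never merges vertices on distinct paths. Interiors stay free of image nodes, since the images are the same vertices as before.
\end{enumerate}
I expect the delicate points to be the fresh child and the bookkeeping of the multiset presentation, so that retained inherited formulae really reappear. Both should follow the verification already given in \cref{lem:instance-lifting}.
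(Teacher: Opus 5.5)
Your proposal is correct and follows the paper's proof essentially step for step. It uses the same marked set $M$, with fresh images charged to the already-marked principal node, the same occurrence-based retention of inputs, the restriction and reduced-hull lemmas, and the same transferred map into $P'_i$. The only differences are minor: you leave the easy injectivity check implicit, and for clause~(iv) you argue that suppression never merges distinct paths, where the paper argues that every branch point of the representing paths survives. Both work, because each new path's vertex set is just the surviving part of the old one, while all image nodes survive.
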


\begin{figure}[H]
\centering
\setlength{\unitlength}{0.68mm}
\begin{picture}(220,58)
\put(25,48){\circle*{4}}
\put(25,47){\line(-1,-1){10}}
\put(25,47){\line(0,-1){10}}
\put(25,47){\line(1,-1){10}}
\put(15,35){\circle{4}}
\put(22.5,32.5){\framebox(5,5){}}
\put(35,35){\circle{4}}
\put(15,33){\line(-1,-1){10}}
\put(15,33){\line(0,-1){10}}
\put(25,32){\line(0,-1){10}}
\put(35,33){\line(0,-1){10}}
\put(35,33){\line(1,-1){10}}
\put(5,21){\circle*{4}}
\put(15,21){\circle{4}}
\put(25,21){\circle{4}}
\put(35,21){\circle*{4}}
\put(45,21){\circle{4}}
\put(23,8){$C$}
\put(53,34){\vector(1,0){22}}
\put(58,43){\makebox(15,0){\small rooted}}
\put(58,39){\makebox(15,0){\small hull}}
\put(100,48){\circle*{4}}
\put(100,47){\line(-1,-1){10}}
\put(100,47){\line(0,-1){10}}
\put(100,47){\line(1,-1){10}}
\put(90,35){\circle{4}}
\put(97.5,32.5){\framebox(5,5){}}
\put(110,35){\circle{4}}
\put(90,33){\line(0,-1){10}}
\put(100,32){\line(0,-1){10}}
\put(110,33){\line(0,-1){10}}
\put(90,21){\circle*{4}}
\put(100,21){\circle{4}}
\put(110,21){\circle*{4}}
\put(93,8){\small hull}
\put(128,34){\vector(1,0){22}}
\put(130,43){\makebox(20,0){\small suppress}}
\put(128,39){\makebox(24,0){\small unary chains}}
\put(185,48){\circle*{4}}
\put(185,47){\line(-1,-1){10}}
\put(185,47){\line(0,-1){10}}
\put(185,47){\line(1,-1){10}}
\put(175,35){\circle*{4}}
\put(182.5,32.5){\framebox(5,5){}}
\put(195,35){\circle*{4}}
\put(182,18){$C'$}
\end{picture}
\caption{Marked nodes, their rooted hull, and the tree left after unmarked
unary stretches are suppressed.  Filled and boxed nodes are marked; the
box singles out an active position, while the filled nodes represent the
remaining marked vertices.}
\label{fig:finite-support}
\end{figure}

\begin{proof}
We begin with the marked nodes; the tree estimate will be used only after the
required formulae and paths have been shown to survive.  Let $I$ be the given
inference.  Some nodes in the images of the $B_i$ lie at premise nodes inherited from the
conclusion.  To mark their corresponding conclusion nodes, invert the
canonical correspondence in each premise:
\[
 \operatorname{org}_i^I=(\iota_i^I)^{-1}:
 \operatorname{Old}(P_i)\longrightarrow V_C.
\]
This is well defined because a rule neither identifies nor duplicates a
conclusion node within one premise.  Let $M\subseteq V_C$ consist of the root,
the output node, every active conclusion node of $I$, and every node
$\operatorname{org}_i^I(e_i(v))$ with $v\in V_{B_i}$ and
$e_i(v)\in\operatorname{Old}(P_i)$.  Repetitions are ignored.  There are at
most $a_r$ active nodes, and each node of $B_i$ contributes at most one further
node of the conclusion.  Hence
\[
 |M|\leq a_r+2+\sum_i|B_i|.
\]

The marked vertices determine the shape of the smaller conclusion, but not
yet the formulae that must survive there.  The formulae to retain are those
needed to reconstruct the premises.  Keep the conclusion output at its marked node.  Among the inputs, keep every principal
formula required in the conclusion of $I$ and every formula needed to make a
nullary conclusion initial.  Fix $i$, $v\in V_{B_i}$, and $A\in\Gamma_{B_i}(v)$.  If $e_i(v)$ lies at an inherited node, choose an occurrence of $A$ at
$e_i(v)$ in the fixed multiset presentation of $P_i$.  When that occurrence
is inherited from the conclusion side context, retain it at
$\operatorname{org}_i^I(e_i(v))$; when it is supplied by the rule, retain
nothing, since the restricted inference supplies it again.  If $e_i(v)$ is
fresh, then $P_i$ is a premise of $\Diamond^{\bullet}$: its fresh child has
no side context, and its sole input is supplied again by the restricted
inference.  (The fresh child introduced by $\Box^{\circ}$ carries the output
and no inputs.)  No other input formulae are needed when the premises are
reconstructed below.

Let $K$ be the union in $C$ of the paths from $r_C$ to the nodes of $M$.
Delete every branch outside $K$ and every unchosen input formula.  Suppress
the maximal unmarked degree-$2$ chains exactly as in
\cref{lem:rule-restriction}, and call the resulting nested sequent $C'$.  All marked vertices survive.  Interpreting every new edge
as the corresponding segment of $K$ gives a canonical embedding
$C'\preceq C$.  By \cref{lem:rule-restriction}, the restricted context
supports an application of $r$
\[
 I':\quad C'\Longrightarrow_r(P'_1,\ldots,P'_k).
\]

The same marked data recover the chosen embeddings in the restricted premises.  Define $e'_i:V_{B_i}\to V_{P'_i}$ by
\[
 e'_i(v)=
 \begin{cases}
 \iota_i^{I'}\bigl(\operatorname{org}_i^I(e_i(v))\bigr),
   & e_i(v)\in\operatorname{Old}(P_i),\\[2mm]
 \text{the fresh child of }P'_i,
   & e_i(v)\text{ is fresh}.
 \end{cases}
\]
The first line is defined because the corresponding conclusion node was marked.  The second line can
apply only to a leaf of $B_i$, since a fresh child of a premise is a leaf.
The map is injective: images at inherited nodes retain their distinct origins and the fresh
child lies outside the inherited part.  It also preserves the root.  Indeed, an
embedding sends the root of $B_i$ to the root of $P_i$, which is inherited from the conclusion, and the
first clause sends its origin to the root of $P'_i$.

The chosen occurrences ensure that
\[
 \Gamma_{B_i}(v)\subseteq\Gamma_{P'_i}(e'_i(v))
\]
for every $v$.  The output condition is checked in the same three cases as in
\cref{lem:instance-lifting}.  An inherited output was retained with its node;
an output produced at an inherited node is produced again at the
corresponding marked node; and $\Box^{\circ}$ produces it at the regenerated fresh child.
Thus $e'_i$ preserves the unique output and its formula.

Let $(v,w)$ be an edge of $B_i$.  If both $e_i(v)$ and $e_i(w)$ lie at inherited nodes,
then $e_i(v)$ is an ancestor of $e_i(w)$ in the copy inherited from $C$.  Both corresponding conclusion nodes are marked, so the root-to-target path contained in $K$ contains the
whole segment between them.  Its endpoints are distinct marked
vertices; suppression may shorten the segment but cannot make it empty.  If $e_i(w)$ is fresh, the part of the representing path inside the inherited tree ends at the modal
principal node.  That node is active and marked, and the fresh edge in $P'_i$
completes the path.  No edge of $B_i$ can start at a fresh image, because the
fresh child is a leaf.

For clause~(iv) of \cref{def:embedding}, every branch point
of the union of the representing paths lies in the rooted hull $K$.  Such a
point is either an image node, hence marked, or has degree at least $3$ in
$K$; in either case it survives.  Simultaneous suppression is performed only
inside the maximal unmarked degree-$2$ chains between surviving vertices.  It
therefore preserves the incidence graph of all relevant routes: it neither
identifies surviving vertices nor joins two previously distinct segments.
Consequently distinct edge paths still meet only at the image of a common
endpoint, and no image node becomes internal to another edge path.  Hence
$e'_i:B_i\preceq P'_i$.

The tree underlying $C'$ is obtained from $K$ by the suppression just
described.  By \cref{lem:reduced-hull},
\[
 |C'|\leq2|M|-1\leq
 2\left(a_r+2+\sum_i|B_i|\right)-1.
\]
\end{proof}

\begin{corollary}[Bounded minimal predecessors]
\label{cor:bounded-predecessors}
Let $U=\Up B$ with $B$ finite, and let $r$ have $k$ premises.  If $k>0$ and
$B=\varnothing$, then $\Pre_r(U)=\varnothing$.  If $k=0$, every minimal
member of $\Pre_r(U)$ has at most $2(a_r+2)-1$ nodes.  If
$k>0$ and $B\neq\varnothing$, the premises of a single inference need not lie
above the same member of $B$: for the $i$th premise one may have to choose a
separate $B_i\in B$.  The finite-support bound therefore depends first on a
tuple $(B_1,\ldots,B_k)\in B^k$.  Taking the maximum over the finitely many
such tuples gives the uniform bound
\[
 \max_{(B_1,\ldots,B_k)\in B^k}
 2\left(a_r+2+\sum_{i=1}^{k}|B_i|\right)-1.
\]
\end{corollary}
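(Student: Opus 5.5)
The three cases follow directly from the definition of $\Pre_r$ together with \cref{lem:finite-support}, applied to a minimal member and compared with it modulo $\equiv$.

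\emph{Empty basis.} If $k>0$ and $B=\varnothing$, then $U=\Up\varnothing=\varnothing$. An inference with $k\geq1$ premises needs some $P_1\in U$, so no conclusion qualifies and $\Pre_r(U)=\varnothing$.

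\emph{General step.} For the other two cases, fix a minimal member $C$ of $\Pre_r(U)$ and an inference $C\Longrightarrow_r(P_1,\ldots,P_k)$ with every $P_i\in U$. Since $U=\Up B$, for each $i$ I would choose some $B_i\in B$ and an embedding $e_i:B_i\preceq P_i$; for $k=0$ this choice is vacuous. Applying \cref{lem:finite-support} to this inference and these embeddings gives another application of $r$ with conclusion $C'\preceq C$ and premises $P'_i$ satisfying $B_i\preceq P'_i$. Thus each $P'_i\in\Up B=U$, and so $C'\in\Pre_r(U)$.

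\emph{Transfer of the bound.} Minimality of $C$ excludes $C'\lhd C$. Together with $C'\preceq C$ this forces $C\preceq C'$, so $C\equiv C'$. The bound from \cref{lem:finite-support} applies to $C'$. To move it to $C$ itself, I would observe that an embedding is injective on nodes, so $C\preceq C'$ gives $|C|\leq|C'|$. Hence $|C|\leq 2(a_r+2+\sum_i|B_i|)-1$. This is at most the maximum over all tuples in $B^k$, and that maximum is finite because $B$ is finite. For $k=0$ the sum is empty and the bound reduces to $2(a_r+2)-1$.

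\emph{Main subtlety.} The only delicate point is this passage from $C'$ to $C$: minimality is taken only modulo $\equiv$, and injectivity of embeddings is what makes the cardinality bound invariant under mutual embedding. If one prefers to bound only some representative of each minimal class, it suffices to take $C'$ itself, and no further argument is needed.
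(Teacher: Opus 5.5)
Your proposal is correct and follows the paper's proof essentially step for step. You choose $B_i\preceq P_i$, apply the finite-support lemma to obtain $C'\preceq C$ with $C'\in\Pre_r(U)$ within the bound, use minimality modulo $\equiv$ to get $C\preceq C'$, and transfer the bound through injectivity of embeddings; the only differences are cosmetic (you treat the empty-basis case explicitly and use $|C|\leq|C'|$ where the paper notes $|C|=|C'|$).
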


\begin{proof}
Let
\[
 C\Longrightarrow_r(P_1,\ldots,P_k)
\]
witness that the minimal nested sequent $C$ belongs to $\Pre_r(U)$.  When
$k>0$, choose $B_i\in B$ with $B_i\preceq P_i$; for $k=0$ use the empty
tuple.  The finite-support lemma gives another inference
\[
 C'\Longrightarrow_r(P'_1,\ldots,P'_k)
\]
with $C'\preceq C$, within the displayed bound, and
$B_i\preceq P'_i$.  Since $U=\Up B$, every $P'_i$ belongs to $U$; hence
$C'\in\Pre_r(U)$.  Minimality of $C$ modulo $\equiv$ now gives
$C\preceq C'$, for otherwise $C'\lhd C$.  The embeddings in both directions
are injections between finite node sets, so $|C|=|C'|$, and the same bound
holds for $C$.
\end{proof}

\subsection{Computing predecessors}

A node bound leaves only finitely many nested sequents and rule applications
to inspect.

\begin{proposition}
\label{prop:bounded-sequents}
For fixed finite $\Phi$ and $n\in\mathbb N$, nested sequents with at most $n$
nodes are effectively enumerable.  For each such nested sequent and each rule schema
$r$, the applications of $r$ having that nested sequent as conclusion are
effectively enumerable.  The relation $\preceq$ is decidable on finite
nested sequents, and whenever $S\preceq T$, a witnessing embedding can be found
effectively.
\end{proposition}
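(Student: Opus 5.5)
We prove the three claims in turn, and in each case reduce to exhaustive search over finite data that has already been fixed.

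\emph{Enumerating bounded sequents.} A nested sequent with at most $n$ nodes is determined up to isomorphism by the following data:
\begin{itemize}
\item a rooted tree on $m\leq n$ vertices, given for instance as a parent function on $\{1,\ldots,m\}$ with root $1$;
\item an input set $\Gamma(v)\subseteq\Phi$ at each vertex;
\item an output node $o$ together with an output formula in $\Phi$.
\end{itemize}
For fixed $\Phi$ and $n$ these data range over a finite, explicitly listable set. We list them all. Isomorphic presentations may be listed repeatedly, or removed by a brute-force isomorphism test over the finitely many bijections. Either way the enumeration is effective, and sibling multiplicities are respected automatically because the trees are concrete.

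\emph{Enumerating rule applications with a given conclusion.} Fix a conclusion $C$ and a schema $r$. An application is determined by the following finite choices:
\begin{itemize}
\item the active node or nodes of $C$;
\item for propagation, a pair $u\prec v$, which can be checked by following parent pointers;
\item the principal formula, which must appear in the right polarity at the active node;
\item for input principal formulae, whether the formula is also retained in the side context, since these are the two readings allowed by set union.
\end{itemize}
Everything else in the context is forced. The premises are then computed syntactically from the displayed schemes:
\begin{itemize}
\item for $\to^{\bullet}$, the first premise deletes the old output and places $A^{\circ}$ together with $(A\to B)^{\bullet}$;
\item for $\Diamond^{\bullet}$ and $\Box^{\circ}$, a fresh leaf child carrying the subformula is attached;
\item for $p_{\Diamond}$ and $p_{\Box}$, the formula is moved or copied to $v$.
\end{itemize}
Nullary rules are handled by a membership test at each node. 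All these choices are bounded by $|C|$, $|\Phi|$ and the finite list of schemas, so the applications are enumerable. Each candidate is certified by checking it against the rule figures.

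\emph{Deciding $\preceq$ and producing a witness.} This was already observed after \cref{def:embedding}. We range over the finitely many injections $f:V_S\to V_T$ with $f(r_S)=r_T$ and test the clauses as follows:
\begin{itemize}
\item clause~(ii) by comparing finite sets and the output;
\item clause~(iii) by checking, for each edge $(v,w)$, whether $f(w)$ is a strict descendant of $f(v)$, recovering the unique tree path $\pi_e$ by parent pointers;
\item clause~(iv) by computing the pairwise intersections of the finitely many vertex sets and the internal vertices.
\end{itemize}
The first $f$ passing all tests is returned as the witness. If none passes, $S\not\preceq T$.

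The only point needing care is the second claim: because inputs are sets, a displayed conclusion does not determine the inference. The fix is to treat the retained-or-not choice for each principal input occurrence as part of the enumerated data, matching the convention in \cref{lem:repetitions}. This still leaves only finitely many choices.
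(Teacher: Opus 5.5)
Your proposal is correct and follows essentially the same route as the paper: parent-pointer encodings of bounded trees with input sets and a single output node, enumeration of applications through finitely many choices (schema, active nodes, principal formula, propagation endpoints, and for each principal input whether an equal copy stays in the side context), and brute-force search over root-preserving injections with the path tests read off from parent data. The only difference is cosmetic: you generate the premises directly from these choices, whereas the paper filters candidate descriptions against the rule equations, and both amount to the same finite check.
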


\begin{proof}
Number the root by $0$ and record the parent of every other node, requiring
that repeated passage to parents reaches $0$.  Record also an input subset of
$\Phi$ at each node and the unique output node together with its output
formula.  For a fixed bound there are only finitely many such descriptions.  Some
of them describe isomorphic unordered trees, but the repetition is harmless.

For a fixed conclusion with at most $n$ nodes, the part of every premise
inherited from the conclusion has the same modal tree; a modal introduction
may add one fresh leaf.
Thus every premise of an application under consideration has at most $n+1$
nodes.  The formula variables of a rule range over the fixed set $\Phi$;
once a principal formula is chosen, its displayed immediate subformulae are
fixed.  An application is determined by finitely many choices: a rule schema,
the conclusion nodes at which it acts, its principal formulae, and, for each displayed input
formula, whether an equal occurrence is also retained in the side context; a
propagation rule additionally requires its two endpoints.  There are therefore finitely many descriptions to inspect.  A description
is accepted when its premise and conclusion data satisfy the displayed rule
equation.  Equality of contexts, the chosen principal occurrences, freshness,
and strict descendancy are all decidable from the finite data.  Thus the finite
enumeration is filtered effectively; computability is not being inferred from
finiteness alone.

Deciding $S\preceq T$ is also a finite search: enumerate the injections from
$V_S$ to $V_T$.  For each injection, check the root, input, and output conditions of
\cref{def:embedding}.  The parent data determine the unique paths between
comparable nodes, so the remaining path-intersection and internal-vertex
conditions are finite tests as well.  If an injection passes all of them,
it is a witnessing embedding; hence the same finite search also produces a
witness whenever one exists.
\end{proof}

\begin{lemma}[Computing a basis of predecessors]
\label{lem:predecessor-basis}
Let $U=\Up B$ with $B$ finite.  For every rule schema $r$, one can compute a
finite basis of $\Pre_r(U)$.
\end{lemma}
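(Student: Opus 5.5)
The plan is to compute the bound of \cref{cor:bounded-predecessors}, enumerate every candidate conclusion within that bound, keep the ones that are genuine predecessors, and then discard the non-minimal ones.

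First, dispose of the degenerate case. If $k>0$ and $B=\varnothing$, then $\Pre_r(U)=\varnothing$ by \cref{cor:bounded-predecessors}, so the empty basis is returned.

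Otherwise, compute the bound
\[
 N=\max_{(B_1,\ldots,B_k)\in B^k}2\Bigl(a_r+2+\sum_{i=1}^k|B_i|\Bigr)-1 .
\]
This is effective, because $B$ is a finite list of finite nested sequents and $a_r$ is read off from the rule schema. For $k=0$ the bound is $N=2(a_r+2)-1$.

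Next, build a finite candidate set. By \cref{prop:bounded-sequents}, enumerate all nested sequents $C$ with $|C|\le N$. For each such $C$, enumerate the finitely many applications $C\Longrightarrow_r(P_1,\ldots,P_k)$. An application is accepted when, for each $i$, some $B_i\in B$ satisfies $B_i\preceq P_i$; this is decidable by \cref{prop:bounded-sequents}, since each $P_i$ is finite with at most $N+1$ nodes. Every accepted $C$ lies in $\Pre_r(U)$, because $U=\Up B$. Conversely, every $C$ with $|C|\le N$ that lies in $\Pre_r(U)$ is accepted. Let $F$ be the finite set of accepted conclusions.

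Then prune $F$ to its minimal elements. Compare all pairs with the decidable relation $\preceq$, discard every $C$ for which some $C'\in F$ satisfies $C'\lhd C$, and keep one representative per $\equiv$-class. Call the result $F_{\min}$.

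It remains to check that $F_{\min}$ is a basis, i.e.\ $\Up F_{\min}=\Pre_r(U)$.

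For the inclusion $\Up F_{\min}\subseteq\Pre_r(U)$, note that $F\subseteq\Pre_r(U)$, and $\Pre_r(U)$ is upward closed. Upward closure follows from \cref{lem:instance-lifting}, as observed in the paragraph after the definition of $\Pre$.

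For the reverse inclusion, let $D\in\Pre_r(U)$. The proof of \cref{prop:wqo-consequences} gives a $\lhd$-minimal $C\in\Pre_r(U)$ with $C\preceq D$. By \cref{cor:bounded-predecessors}, $|C|\le N$, so $C\in F$. Moreover $C$ is minimal in $F$, since $F\subseteq\Pre_r(U)$. Hence $C$ is $\equiv$ to a kept representative, and $D\in\Up F_{\min}$.

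Pruning is therefore optional for correctness: $F$ itself is already a basis.

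The main obstacle is exactness of the finite search. The enumeration of applications must range over all applications with conclusion $C$, including the choice of whether a principal input formula is also retained in the side context. Otherwise a predecessor witnessed only by the other reading might be missed. The treatment of applications in \cref{prop:bounded-sequents} covers exactly these choices. Its bound of $N+1$ on premise size ensures the premise tests $B_i\preceq P_i$ are finite searches.
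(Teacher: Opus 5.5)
Your proof is correct and follows the paper's route: you bound the minimal predecessors, enumerate the conclusions and applications within that bound using \cref{prop:bounded-sequents}, filter by the decidable tests $B_i\preceq P_i$, and minimise modulo $\equiv$. The only difference is organisational: you use the single uniform bound of \cref{cor:bounded-predecessors}, together with the existence of minimal elements from \cref{prop:wqo-consequences}, whereas the paper splits by tuples $\vec B\in B^k$ and applies \cref{lem:finite-support} directly to each set $S_{\vec B}$, so that every member (not only minimal ones) already lies above a bounded one.
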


\begin{proof}
If $r$ has $k>0$ premises and $B=\varnothing$, then $\Up B=\varnothing$
and $\Pre_r(\Up B)=\varnothing$.  If $k=0$, let $\vec B$ be the empty
tuple.  Otherwise, each premise in $\Up B$ lies above some member of $B$.
We may therefore separate the predecessors according to the tuple of
basis elements chosen below their premises, and fix
$\vec B=(B_1,\ldots,B_k)\in B^k$.  In either case, let $S_{\vec B}$
consist of the conclusions $C$ for which $r$ can be applied as
\[
 C\Longrightarrow_r(P_1,\ldots,P_k)
 \qquad\text{with}\qquad B_i\preceq P_i\quad(1\leq i\leq k).
\]
The finite-support lemma bounds representatives of the minimal classes of
$S_{\vec B}$.  For every $C\in S_{\vec B}$ there is a
$D\in S_{\vec B}$ such that $D\preceq C$ and
\[
 |D|\leq2\left(a_r+2+\sum_i|B_i|\right)-1.
\]
If $C$ is minimal modulo $\equiv$, then $C\preceq D$ as well, so its
$\equiv$-class already has a representative satisfying this bound.  We
therefore consider the finite set
\[
 F_{\vec B}=\left\{C\in S_{\vec B}:
 |C|\leq2\left(a_r+2+\sum_i|B_i|\right)-1\right\}.
\]

By \cref{prop:bounded-sequents}, we can enumerate all nested sequents
within the bound and all applications of $r$ having one of them as conclusion.
We retain precisely those applications for which $B_i\preceq P_i$ for every $i$;
these tests are decidable.  Hence $F_{\vec B}$ is effectively computable.
Let $M_{\vec B}$ contain one representative from each $\equiv$-class of
those $C\in F_{\vec B}$ for which no $D\in F_{\vec B}$ satisfies
$D\lhd C$.  Pairwise embedding tests compute this set.

Every $C\in S_{\vec B}$ lies above some element of $F_{\vec B}$.  Repeated
strict descent within the finite set $F_{\vec B}$ must terminate, and its
last element belongs to one of the classes represented in $M_{\vec B}$.
Hence some member of $M_{\vec B}$ embeds into $C$.
  Conversely, $S_{\vec B}$ is
upward closed: lifting the inference along $C\preceq C'$ gives premises above
the original premises, and therefore still above the fixed $B_i$.  Since every
member of $M_{\vec B}$ lies in $S_{\vec B}$, we obtain
\[
 S_{\vec B}=\Up M_{\vec B}.
\]

A conclusion belongs to $\Pre_r(\Up B)$ exactly when it belongs to some
$S_{\vec B}$: choose below each premise a basis member $B_i\in B$, and
conversely use upward closure of $\Up B$.  Thus the union of the finitely many
sets $M_{\vec B}$ is a basis of $\Pre_r(\Up B)$; minimizing that union modulo
$\equiv$ gives the desired basis.  For $k=0$ there is one empty tuple, no
premise test, and the bound is $2(a_r+2)-1$.
\end{proof}

\subsection{Proof height and stabilisation}

The nullary rules supply the first basis: apply
\cref{lem:predecessor-basis} with $B=\varnothing$ to each such rule whose
conclusions use only formulae from $\Phi$: the
$id$-conclusions for atoms $p\in\Phi$, and $\bot^{\bullet}$-conclusions
only when
$\bot\in\Phi$.  Since $k=0$, the premise condition is vacuous.  Minimize the union of these bases modulo
$\equiv$ and call the result $B_0$.  
For both schemas $a_r=1$, so every minimal contextual initial sequent has at
most
\[
 2(1+2)-1=5
\]
nodes.
  By \cref{lem:instance-lifting}, contextual initial sequents are upward
closed, so $\Up B_0$ is exactly their set.  These stages simply record proof height.  Take an initial sequent to have height $0$, and a
non-nullary inference to have height one plus the maximum height of its premise
proofs.  For an upward-closed set $U$, put
\[
 \mathcal F(U)=U\cup\Pre(U).
\]
Starting from the contextual initial sequents, set
\[
 U_0=\Up B_0,
 \qquad
 U_{n+1}=\mathcal F(U_n).
\]
By \cref{lem:predecessor-basis}, a finite basis of each $U_{n+1}$ is computed from
a finite basis of $U_n$ by taking the union of the existing basis with the finitely
many bases for the predecessor sets and minimizing modulo $\equiv$.

\begin{lemma}[Proof height and the stages $U_n$]
\label{lem:height}
For every $n$ and every nested sequent $C$,
\[
 C\in U_n
 \quad\Longleftrightarrow\quad
 C\text{ has a }\NIKfour\text{ proof of height at most }n.
\]
Consequently $\bigcup_n U_n$ is exactly the set of provable nested sequents.
\end{lemma}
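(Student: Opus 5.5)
We argue by induction on $n$, proving both directions at once.

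For $n=0$, a proof of height $0$ consists of a single nullary inference, so it is exactly a contextual initial sequent. By construction, $B_0$ collects minimized bases for the $id$- and $\bot^{\bullet}$-conclusions within $\Phi$. By \cref{lem:instance-lifting}, the set of contextual initial sequents is upward closed, so $U_0=\Up B_0$ is exactly that set. One point needs checking: every member of $\Up B_0$ really is initial, not merely above an initial sequent. This follows by lifting the nullary inference along the embedding, as in the $k=0$ case of \cref{lem:instance-lifting}. Conversely, any initial sequent lies above one of the minimal representatives chosen by \cref{lem:predecessor-basis}, hence in $\Up B_0$.

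For the step, assume the equivalence at $n$.

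\emph{Left to right.} Let $C\in U_{n+1}=U_n\cup\Pre(U_n)$. If $C\in U_n$, the induction hypothesis gives a proof of height at most $n\le n+1$. Otherwise $C\Longrightarrow_r(P_1,\ldots,P_k)$ with every $P_i\in U_n$. Each $P_i$ then has a proof of height at most $n$, and applying $r$ gives a proof of $C$ of height at most $n+1$. When $k=0$, $C$ is initial and has height $0$.

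\emph{Right to left.} Suppose $C$ has a proof of height at most $n+1$. If the last inference is nullary, then $C\in U_0$. We need $U_0\subseteq U_n$, which holds because the sequence $U_m$ is increasing: $\mathcal F(U)\supseteq U$. Otherwise the last inference has premises whose proofs have height at most $n$. By the induction hypothesis these premises lie in $U_n$, so $C\in\Pre_r(U_n)\subseteq U_{n+1}$.

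The consequence follows at once. Every proof has some finite height $n$, which places its conclusion in $U_n$. Conversely, every member of $\bigcup_n U_n$ lies in some $U_n$ and therefore has a proof.

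The only delicate point is that the $U_n$ are defined as sets ($\Up B_0$, then $\mathcal F$), not via their computed bases. The identity $U_0=\Up B_0$ must therefore be justified by the upward closure of the initial sequents together with the correctness of the bases produced by \cref{lem:predecessor-basis}. Once $U_0$ is identified with the set of contextual initial sequents, the induction is routine. We also note that $\Pre(U_n)$ is taken over inferences whose conclusions lie in the fixed universe over $\Phi$, which matches the standing convention on proofs.
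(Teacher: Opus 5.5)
Your proof is correct and follows the paper's argument. You argue by induction on $n$: first $U_0=\Up B_0$ is identified with the contextual initial sequents via upward closure (the $k=0$ case of \cref{lem:instance-lifting}), and then the step unfolds $U_{n+1}=U_n\cup\Pre(U_n)$ in both directions, using $U_0\subseteq U_{n+1}$ for a nullary last inference; your extra remarks on the base case and on the $\Phi$-universe only make explicit what the paper states in one sentence.
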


\begin{proof}
For $n=0$, the set $U_0$ consists exactly of the contextual initial sequents,
which are precisely the nested sequents with proofs of height $0$.  Suppose the
equivalence holds at stage $n$.  If $C\in U_{n+1}$, then either $C\in U_n$ or
$C\in\Pre_r(U_n)$ for some rule $r$.  In the first case the induction
hypothesis gives a proof of height at most $n$.  In the second, every premise
of one application of $r$ has a proof of height at most $n$; appending that
inference gives a proof of $C$ of height at most $n+1$.

Conversely, let $C$ have a proof of height at most $n+1$.  A proof of height
$0$ is initial, hence its conclusion lies in $U_0\subseteq U_{n+1}$.  If the
proof has positive height, every premise of its last inference has height at
most $n$ and therefore belongs to $U_n$ by the induction hypothesis.  Thus
$C\in\Pre(U_n)\subseteq U_{n+1}$.
\end{proof}

\begin{theorem}[Stabilisation]
\label{thm:stabilisation}
One can find an $N$ such that
\[
 U_N=U_{N+1}=\cdots,
\]
and the stable set is exactly the set of provable nested sequents.
\end{theorem}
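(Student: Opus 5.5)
The plan has three parts: stabilisation exists by the ascending-chain condition, it is detected by comparing finite bases, and a single stable step propagates to all later stages.

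The chain $U_0\subseteq U_1\subseteq\cdots$ is increasing, since $\mathcal F(U)=U\cup\Pre(U)\supseteq U$. Each $U_n$ is upward closed: $\Pre_r$ preserves upward closure by \cref{lem:instance-lifting}, and unions of upward-closed sets are upward closed. By \cref{prop:wqo-consequences} the chain stabilises, so some $N$ with $U_N=U_{N+1}$ exists. What remains is to find such an $N$ effectively and to show that one stable step forces all later ones.

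For the second point, note that $U_{n+1}=\mathcal F(U_n)$ depends only on the set $U_n$. If $U_N=U_{N+1}$, then $U_{N+2}=\mathcal F(U_{N+1})=\mathcal F(U_N)=U_{N+1}$, and induction gives $U_m=U_N$ for all $m\geq N$.

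For detection, \cref{lem:predecessor-basis} lets us compute finite bases $B_n$ with $U_n=\Up B_n$ successively, as described before \cref{lem:height}. Equality of two finitely based upward-closed sets is decidable. Since $U_n\subseteq U_{n+1}$ always holds, we only need $U_{n+1}\subseteq U_n$, which holds exactly when every element of $B_{n+1}$ lies above some element of $B_n$. This is a finite family of $\preceq$-tests, decidable by \cref{prop:bounded-sequents}. The procedure runs through $n=0,1,2,\dots$ and halts at the first $n$ where this test succeeds. Termination is guaranteed by the existence argument above: the test succeeds at the first stable index, so the search cannot run forever, even though no bound on $N$ is known in advance. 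This yields a computable $N$.

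Finally, the stable set equals $\bigcup_n U_n$, because the chain is constant from $N$ onward. By \cref{lem:height}, $\bigcup_n U_n$ is exactly the set of provable nested sequents.

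The only delicate point is the difference between the existence of stabilisation, which Kruskal provides non-constructively, and its detection. The latter depends on the fact that one equality $U_N=U_{N+1}$ propagates forward, and on the fact that the bases, and comparisons between them, are computable. Both rest on the earlier lemmas, so this step is mostly bookkeeping.
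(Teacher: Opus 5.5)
Your proposal is correct and follows the paper's proof essentially step for step: upward closure of each $U_n$ via \cref{lem:instance-lifting}, existence of a stable index from \cref{prop:wqo-consequences}, propagation because $U_N$ is a fixed point of $\mathcal F$, detection by comparing finite generators with decidable $\preceq$-tests, and identification of the stable set with the provable sequents via \cref{lem:height}. The only cosmetic difference is that you note only the inclusion $\Up B_{n+1}\subseteq\Up B_n$ needs testing, since the chain is increasing, whereas the paper phrases the test as mutual inclusion.
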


\begin{proof}
The set $U_0$ is upward closed, and \cref{lem:instance-lifting} shows
inductively that every $U_n$ is upward closed.  The sequence is increasing, so
\cref{prop:wqo-consequences} gives an $N$ with $U_N=U_{N+1}$.  Since
$U_{N+1}=\mathcal F(U_N)$, this equality says that $U_N$ is a fixed point of
$\mathcal F$.  Every later stage is therefore the same set.

The first stable index is effectively recognisable.  To compare two finitely
based upward-closed sets, it is enough to compare their generators.  The
orientation is forced by upward closure: every generator of the left-hand
side must already lie above some generator of the right-hand side.  Thus, for
finite sets $B,C$, not necessarily minimized,
\[
 \Up B\subseteq\Up C
 \quad\Longleftrightarrow\quad
 \forall b\in B\;\exists c\in C\ (c\preceq b).
\]
For the forward direction, $b\in\Up B$ and the inclusion gives
$b\in\Up C$.  Conversely, if $x\in\Up B$, choose $b\in B$ with
$b\preceq x$ and then $c\in C$ with $c\preceq b$; transitivity gives
$x\in\Up C$.  Since $\preceq$ is decidable, mutual inclusion of finitely based
upward-closed sets is decidable.  Successive upward closures can therefore be
compared effectively, and the first equality is detected when it occurs.

By \cref{lem:height}, $\bigcup_n U_n$ is exactly the set of nested sequents
having finite $\NIKfour$ proofs.  Once the chain stabilises, this union is
$U_N$.
\end{proof}

\begin{theorem}[Decidability of $\IKfour$]
\label{thm:main}
Theoremhood in Simpson's intuitionistic modal logic $\IKfour$, equivalently
validity on transitive Fischer--Servi frames, is decidable.
\end{theorem}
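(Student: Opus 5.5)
The proof assembles the preceding results into an algorithm and checks that each ingredient is effective. Given a formula $A_0$, compute $\Phi=\Sub(A_0)$. By \cref{thm:nik4-complete}, $A_0$ is a theorem of $\IKfour$, that is, valid on transitive Fischer--Servi frames, exactly when $A_0^{\circ}$ is derivable in $\NIKfour$. By \cref{lem:subformula-property}, every such cut-free proof lives in the $\Phi$-universe, so derivability may be tested inside the nested sequents over $\Phi$ on which $\preceq$, $\Pre$, and the stages $U_n$ were defined.

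The procedure itself runs as follows. First compute the finite basis $B_0$ of contextual initial sequents by applying \cref{lem:predecessor-basis} to the nullary rules. Then iterate: from a finite basis of $U_n$, obtain a finite basis of $U_{n+1}=\mathcal F(U_n)$ by computing a basis of $\Pre_r(U_n)$ for each of the finitely many rule schemas, again by \cref{lem:predecessor-basis}. Take the union with the old basis and minimise modulo $\equiv$ using the decidable relation $\preceq$ from \cref{prop:bounded-sequents}. After each step, test whether $\Up B_{n+1}\subseteq\Up B_n$ using the generator criterion from the proof of \cref{thm:stabilisation}. By \cref{thm:stabilisation}, this test succeeds at some finite $N$, and $U_N$ is then the set of provable nested sequents. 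The loop therefore terminates. Finally, decide whether $A_0^{\circ}\in U_N$ by checking whether some basis element $b\in B_N$ satisfies $b\preceq A_0^{\circ}$; this is again a finite embedding test.

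For correctness I would argue both directions explicitly. If the procedure answers yes, some $b\in B_N$ embeds into $A_0^{\circ}$. Then $b$ is provable by \cref{lem:height}, so $A_0^{\circ}$ is provable by \cref{lem:weakening-embedding}, and soundness (\cref{thm:nik4-sound}) gives validity. If $A_0$ is valid, completeness gives a proof of $A_0^{\circ}$ of some height $n$. By \cref{lem:height} and stabilisation, $A_0^{\circ}\in U_n\subseteq U_N=\Up B_N$, so the procedure answers yes.

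I expect the only delicate point to be termination. Kruskal's theorem by itself only guarantees that the chain stabilises; it does not say that the algorithm can recognise when this has happened. That gap is already closed in \cref{thm:stabilisation}: the condition $U_{n+1}=U_n$ is decidable from finite bases, and once it holds at one stage, $U_n$ is a fixed point of $\mathcal F$. So the first detected equality is genuine stabilisation rather than a temporary plateau, and no further argument beyond citing these results is needed.
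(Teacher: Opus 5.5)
Your proposal is correct and follows the paper's proof essentially step for step: the same saturation loop over finite bases, the same detection of the first fixed point via the generator inclusion criterion, and the same final embedding test against the one-node sequent $A_0^{\circ}$, with soundness and cut-free completeness linking derivability to validity. Testing only $\Up B_{n+1}\subseteq\Up B_n$ suffices because the stages increase by construction. Your two-direction correctness check (via \cref{lem:height}, \cref{lem:weakening-embedding}, and stabilisation) only makes explicit what the paper compresses into identifying $U_N$ with the set of provable nested sequents over $\Phi$.
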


\begin{proof}
Given $A_0$, perform the following finite saturation:
\begin{enumerate}[label=(\arabic*),leftmargin=2.4em]
\item Compute the finite set $\Phi=\Sub(A_0)$.
\item Compute the finite initial basis $B_0$ from the nullary rules.
\item Given a finite basis $B_n$ for $U_n$, compute finite bases of
      $\Pre_r(U_n)$ for every rule schema $r$, take their union with $B_n$,
      and minimise it modulo $\equiv$ using the pairwise embedding tests
      of \cref{prop:bounded-sequents} to obtain a basis $B_{n+1}$.
\item Decide whether $\Up B_n=\Up B_{n+1}$ by the finite-basis inclusion
      test of \cref{thm:stabilisation}.  If not, return to step~(3).
\item At the first equality, put $B_{\infty}=B_n$ and form the one-node
      nested sequent $C(A_0)$ whose output is $A_0$.
\item Accept exactly when $D\preceq C(A_0)$ for some
      $D\in B_{\infty}$.
\end{enumerate}
Every operation in steps~(1)--(3) is effective by
\cref{prop:bounded-sequents,lem:predecessor-basis}; the test in step~(4) is
decidable; and \cref{prop:wqo-consequences} guarantees that step~(4) eventually
succeeds.  Thus the procedure is total.  At the first stable stage, $U_{\infty}=\Up B_{\infty}$ is exactly the set of provable
nested sequents over $\Phi$, so \cref{lem:height} yields
\[
 A_0^{\circ}\text{ is derivable}
 \quad\Longleftrightarrow\quad
 \exists D\in B_{\infty}\ (D\preceq C(A_0)).
\]
Because $C(A_0)$ has one node and no inputs, any
$D\preceq C(A_0)$ must itself be the one-node sequent with output $A_0$ and no
inputs.  Thus the last basis test is exactly the original theoremhood query,
not a weakening of it.  The test is decidable because $B_{\infty}$ is finite
and $\preceq$ is decidable by \cref{lem:wqo}.  Soundness
\cref{thm:nik4-sound} and cut-free completeness
\cref{thm:nik4-complete} identify $\NIKfour$-derivability with validity on
transitive Fischer--Servi frames, hence with $\IKfour$.
\end{proof}

Nothing in the construction depended on $\Phi$ being the subformula set of
a single formula; finiteness and closure under subformulae were enough.  The
stable basis therefore contains a little more information than decidability
alone records.

\begin{corollary}[Finite generators for cut-free provability]
\label{cor:finite-generation}
Let $\Phi$ be any finite set of formulae closed under subformulae.  One can
effectively compute a finite family of cut-free $\NIKfour$-proofs
$\pi_1,\ldots,\pi_m$, with respective conclusions $D_1,\ldots,D_m$, and a
number $N$ such that, for every nested sequent $C$ whose formulae lie in
$\Phi$,
\[
 C\text{ is provable}
 \quad\Longleftrightarrow\quad
 \exists j\ (1\leq j\leq m\ \&\ D_j\preceq C).
\]
Whenever $C$ is provable, a proof of $C$ of height at most $N$ can be
constructed effectively from a suitable $\pi_j$ by weakening along an
embedding.
\end{corollary}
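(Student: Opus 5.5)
The plan is to run the saturation of \cref{thm:main} with the given $\Phi$ in place of $\Sub(A_0)$. Nothing in \cref{lem:predecessor-basis}, \cref{lem:height} or \cref{thm:stabilisation} used more than finiteness of $\Phi$ and closure under subformulae; the subformula property is needed in its general form stated after \cref{lem:subformula-property}. This yields a stage $N$ and a finite basis $B_N=\{D_1,\ldots,D_m\}$ with $\Up B_N$ equal to the set of provable nested sequents over $\Phi$. Then the displayed equivalence is immediate: the left-to-right direction is the identity $U_N=\Up B_N$, and the right-to-left direction is \cref{lem:weakening-embedding} applied to a proof of $D_j$.

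The real content is to attach effective proofs $\pi_j$ to the basis elements. I would refine the saturation so that every basis element carries a witnessing proof, by induction on the stage.

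For $B_0$, each element is a contextual initial sequent, and its proof is the single nullary inference. This is found by the enumeration of \cref{prop:bounded-sequents}.

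For the step, suppose each member of $B_n$ has a recorded proof of height at most $n$. A new element $C$ of the basis of $\Pre_r(U_n)$ is produced in \cref{lem:predecessor-basis} together with an explicit application $C\Longrightarrow_r(P_1,\ldots,P_k)$ and tested embeddings $B_i\preceq P_i$. \Cref{prop:bounded-sequents} supplies witnessing embeddings effectively. Applying \cref{lem:weakening-embedding} to the recorded proof of $B_i$ along such an embedding effectively gives a proof of $P_i$ of height at most $n$. Appending the inference $r$ then gives a proof of $C$ of height at most $n+1$.

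Minimisation modulo $\equiv$ only discards elements, so the surviving representatives keep their proofs. When one representative of a class is kept, it is the one whose proof we record. By induction, every $\pi_j$ has height at most $N$, the first stable index.

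Now let $C$ be provable. Search the finite list for a $j$ with $D_j\preceq C$, using the decidable test and witness search of \cref{prop:bounded-sequents}. Then transport $\pi_j$ along that witness with \cref{lem:weakening-embedding}. That lemma does not increase height, so the resulting proof of $C$ has height at most $N$. Setting $N$ to be the first stable index, or equivalently the maximum height of the $\pi_j$, gives the uniform bound.

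The main obstacle I expect is the bookkeeping in the predecessor step. We must make sure that the algorithm of \cref{lem:predecessor-basis} retains, for each output basis element, the inference and the premise embeddings that justified its membership, rather than just the membership test. The enumeration there is over explicit applications filtered by decidable embedding tests, and witnesses are found by the same search. So I would simply state that the algorithm returns these witnesses, and check that nothing in the minimisation step breaks the link between an element and its proof.
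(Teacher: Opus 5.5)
Your proposal is correct and follows essentially the same route as the paper. The paper also runs the saturation over the given $\Phi$ and carries a proof of height at most $n$ with each stage-$n$ basis element: nullary proofs at stage $0$, and at later stages the retained proofs of the $B_i$ weakened along the witnessed embeddings $B_i\preceq P_i$ with the inference appended, keeping one proof per chosen representative during minimisation. It then transports $\pi_j$ along an effectively found embedding $D_j\preceq C$ by \cref{lem:weakening-embedding}. Your ``equivalently'' is also right: for $N\geq1$, some $D_j$ lies outside $U_{N-1}$, so by \cref{lem:height} its retained proof has height exactly $N$.
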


\begin{proof}
Iterate backward application of the rules over the chosen $\Phi$ until the
stable stage.  Closure under subformulae ensures that every premise still uses
only formulae in $\Phi$.  Alongside each member of the basis at stage $n$, we
retain a proof of height at most $n$.  At stage $0$ this is the corresponding
nullary proof.

Suppose proofs have been retained for the basis of $U_n$.  Whenever the
computation of predecessors retains a conclusion
\[
 C\Longrightarrow_r(P_1,\ldots,P_k)
\]
above a tuple of basis elements $B_i\preceq P_i$, the finite search in
\cref{prop:bounded-sequents} supplies witnessing embeddings.  The effective weakening lemma turns the retained proof of each $B_i$ into a
proof of
$P_i$ of no greater height; appending the displayed inference gives a proof of
$C$ of height at most $n+1$.  Previously retained basis elements keep their proofs, and during
minimisation one proof is kept with each chosen representative.  Thus the
basis at every stage is effectively accompanied by proofs of the appropriate
height.

Let $N$ be the first stable index, and let $D_1,\ldots,D_m$ be the resulting
basis, with retained proofs $\pi_1,\ldots,\pi_m$.  By
\cref{thm:stabilisation}, their upward closure is exactly the set of provable
nested sequents over $\Phi$, and each $\pi_j$ has height at most $N$.  If $C$ is
provable, the basis property gives some $D_j\preceq C$; a witnessing embedding
is effectively found by \cref{prop:bounded-sequents}, and
\cref{lem:weakening-embedding} then constructs a proof of $C$ without increasing
height.  Conversely, every $D_j$ is provable, and \cref{lem:weakening-embedding} preserves
provability.
\end{proof}

The uniform proof-height bound is obtained only when the chain stabilises; it is not an elementary a priori complexity estimate.
Obtaining such an estimate would require length-function bounds for the
Kruskal order and independent bounds on the successive bases.  The argument
decides proof existence directly, without passing through a finite-model
property.

\section{Conclusion}

The obvious first route is a finite-countermodel argument.  The direct
quotient construction breaks when transitive closure creates a modal demand
that was absent before representatives were identified.  We avoid that
difficulty by staying with proofs: the finite objects are nested sequents, not
quotient worlds.

Kruskal's theorem makes the unbounded modal trees well-quasi-ordered, but the
algorithm still requires an effective computation of predecessors.  Since an inference may
occur in a context of arbitrary size, a separate argument is needed to show
that minimal predecessors have bounded representatives.
The finite-support lemma shows that such a representative can be chosen
within an explicit bound.  One keeps the nodes
used by the rule, the output, and the images of the basis elements, together with
the branch points forced by their rooted hull.  Once the required
input formulae have been retained at those nodes, unmarked unary stretches
between them may be suppressed.  The bound is not sharp; its role is only to make the predecessor search
finite.

Transitivity enters once more in the finite-support step.  The propagation
rules ask only for strict descendancy, which survives both the expansion of an
edge into a path and the suppression of unmarked degree-$2$ vertices on that
path.  A different path language may not survive the same reduction, so this
point would have to be checked afresh in any extension.  It holds for
$\IKfour$.  Over any finite set of formulae closed under subformulae, finitely
many cut-free proofs then suffice: weakening along an embedding supplies a
proof of every other provable nested sequent.

\end{document}